\documentclass[ba]{imsart}

\RequirePackage{amsthm,amsmath,amsfonts,amssymb}

\usepackage{float}

\usepackage{xcolor}
\usepackage{bm}
\usepackage[ruled,vlined,linesnumbered]{algorithm2e}

\usepackage{subcaption}
\usepackage{booktabs}             
\usepackage{multirow}             
\usepackage{colortbl}  
\usepackage{comment}

\newcommand{\bigTheta}{\boldsymbol{\Theta}}

\usepackage{enumitem}

\RequirePackage[authoryear]{natbib}
\RequirePackage[colorlinks,citecolor=blue,urlcolor=blue,backref=page,backref=page]{hyperref}

\usepackage{hyperref}
\usepackage{url}

\RequirePackage{graphicx}

\usepackage{xr-hyper}

\pubyear{}
\arxiv{}
\volume{}
\issue{}
\firstpage{}
\lastpage{}

\startlocaldefs
\theoremstyle{plain}
\newtheorem{theorem}{Theorem}[section]

\newtheorem{lemma}[theorem]{Lemma}

\theoremstyle{definition}

\theoremstyle{remark}

\endlocaldefs

\begin{document}

\begin{frontmatter}
\title{Multiple Jump MCMC: A Scalable Algorithm for Bayesian Inference on Binary Model Spaces}
\runtitle{Scalable MCMC algorithm for Bayesian inference}

\begin{aug}
\author[A]{\fnms{Lucas}~\snm{Vogels}\ead[label=e1]{l.f.o.vogels@uva.nl}\orcid{0009-0005-3715-3937}},
\author[A]{\fnms{Reza}~\snm{Mohammadi}\ead[label=e2]{a.mohammadi@uva.nl}\orcid{0000-0001-9538-0648}},
\author[A]{\fnms{Marit}~\snm{Schoonhoven}\ead[label=e3]{}\orcid{0000-0002-2268-7486}}
\author[B]{\fnms{Sinan}~\snm{Yıldırım}\ead[label=e4]{}\orcid{0000-0001-7980-8990}}
\and
\author[A]{\fnms{Ş. İlker}~\snm{Birbil}\ead[label=e5]{}\orcid{0000-0001-7472-7032}}

\address[A]{Amsterdam Business School, University of Amsterdam\printead[presep={,\ }]{e1,e2}}
\address[B]{Industrial Engineering, Sabanci University \printead[presep={\ }]{e4}}
\runauthor{L. Vogels et al.}
\end{aug}

\begin{abstract}
This article considers Bayesian model inference on binary model spaces. Binary model spaces are used by a large class of models, including graphical models, variable selection, mixture distributions, and decision trees. Traditional strategies in this field, such as reversible jump or birth-death MCMC algorithms, are still popular, despite suffering from a slow exploration of the model space. In this article, we propose an alternative: the Multiple Jump MCMC algorithm. The algorithm is simple, rejection-free, and remarkably fast. When applied to undirected Gaussian graphical models, it is $100$ to $200$ times faster than the state-of-the-art, solving models with $500,000$ parameters in less than a minute. We provide theorems showing how accurately our algorithm targets the posterior, and we show how to apply our framework to Gaussian graphical models, Ising models, and variable selection, but note that it applies to most Bayesian posterior inference on binary model spaces. 
\end{abstract}

\begin{keyword}[class=MSC]
\kwd[Primary ]{62F15}
\kwd{60J10}
\kwd[; secondary ]{60J27}
\end{keyword}

\begin{keyword}
\kwd{Bayesian model inference}
\kwd{MCMC sampling}
\kwd{variable selection}
\kwd{graphical models}
\end{keyword}

\end{frontmatter}

\section{Introduction}

This article introduces a new MCMC algorithm for Bayesian model inference on binary model spaces. That is, we consider a set of candidate models, each with an associated parameter vector, and quantify the posterior for each model. We assume that each candidate model can be represented by a binary vector. This set-up is used by a large class of Bayesian model inference problems, including variable selection \citep{Tadesse2021}, mixture distributions \citep{Stephens2000}, regression trees \citep{Mohammadi2020}, and graphical models. Examples of graphical models include undirected Gaussian graphical models \citep{vogels2024}, Ising models \citep{Marsman2022}, or mixed graphical models \citep{Wang2023}. 

On binary model spaces (and on general model spaces), Markov chain Monte Carlo (MCMC) methods are a popular tool for Bayesian model inference. These methods often circumvent the computation of normalizing constants and are suitable for exploring larger model spaces. MCMC methods design a Markov chain that converges to the posterior distribution. When the Markov chain is run long enough, the states in the tail of the Markov chain correspond to samples from the posterior.  

MCMC methods generally focus on uncovering the joint posterior, i.e. the posterior distribution of both the model and the associated parameters. When the associated parameter vectors have different dimensions, these MCMC methods are also referred to as trans-dimensional. They exist in discrete-time \citep{Green1995} or continuous-time \citep{Mohammadi2015}. Alternatively, one can integrate out the parameter vector and focus instead on the marginal posterior. This strategy creates an MCMC solely over the model space, see \citet{Madigan1995} for a discrete-time example and \cite{Mohammadi2024} for a continuous-time example. 

Both trans-dimensional and marginal methods are either reversible or non-reversible. Reversible methods depend on a condition called detailed balance, which states that every move must balance with its reverse move. The detailed balance condition is more than is needed to prove convergence, but a convenient restriction to impose. Popular examples of reversible algorithms are the Reversible Jump MCMC \citep{Green1995} and the Metropolis-Adjusted Langevin Algorithm \citep{Besag1993, Schreck2013}. Both techniques, however, depend on an accept/reject step in which proposed moves can be rejected, thereby wasting computational power. Birth-death (BD) algorithms are a popular reversible alternative. They operate in continuous time and explore the parameter space using repeated births and/or deaths, making them rejection-free. Recent examples of BD algorithms for Bayesian model inference include Gaussian graphical models \citep{Mohammadi2024} and Gaussian directed acyclic graphical models \citep{Jennings2016}. Despite being rejection-free, BD methods still suffer from poor mixing because they only allow local moves. 

Alternatively, non-reversible algorithms relax the detailed balance condition and employ different techniques to establish convergence to the stationary distribution. They enjoy better mixing properties and therefore explore large parameter spaces. In discrete time, a popular non-reversible MCMC strategy involves the idea of ``lifting" \citep{Sherlock2021,Gagnon2020}. In continuous-time, popular non-reversible algorithms are the Zig-Zag algorithm \citep{Bierkens2016} and the Bouncy Particle Sampler \citep{Bouchard2018}. Although both the Zig-Zag algorithm and the Bouncy Particle Sampler depend on differentiable parameter spaces, adaptations exist that allow for their application to binary model spaces, see for example \citep{Bierkens2022,Chevallier2023}. Despite better mixing properties, non-reversible algorithms still either require an accept/reject step or depend on local moves.

The starting point of our work is a reversible BD algorithm. We show that with a simple trick, we can transform such a process into a discrete-time MCMC algorithm that we call the Multiple Jump MCMC (MJ-MCMC) algorithm. Our algorithm is non-reversible and targets the marginal posterior. Unlike traditional accept/reject approaches, our algorithm is rejection-free, and unlike BD processes, our algorithm allows global updates. In fact,  the MJ-MCMC algorithm can traverse the entire model space in a single iteration. It is therefore exceptionally scalable, while maintaining high-quality inference. When applied to the field of undirected Gaussian graphical models (GGMs), our algorithm is $100$ to $200$ times faster than the state-of-the-art, solving $1000$-node problems (i.e. $2^{499500}$ potential models) in under $30$ seconds on a standard desktop. We show how to apply our approach to undirected GGMs, Ising models, and Bayesian variable selection, but we note that the applicability of our work reaches beyond those examples. In fact, it can be applied to most Bayesian posterior inference on binary model spaces. 

The outline of this article is as follows: Section \ref{sec:preliminaries} introduces the preliminaries of Bayesian model inference. We present our algorithm and a theoretical analysis in Section \ref{sec:algorithms}. We show how our work can be applied to undirected Gaussian graphical models, Bayesian variable selection, and Ising models in Section \ref{sec:applications}. We show the scalability of our work in a simulation study (Section \ref{sec:simulation}) and a real-life data application (Section \ref{sec:reallife}). We conclude and give directions for future research in Section \ref{sec:conclusion}.

\section{Preliminaries}
\label{sec:preliminaries}

\paragraph{Binary model spaces:} We consider unknown binary vectors $m = (m_1,\ldots,m_k) \in \mathcal{M} := \{0,1\}^k$ for some integer $k \geq 1$. That is, $m_i \in \{0,1\}$ for all $i=1,\ldots,k$ and $\mathcal{M}$ has $l := 2^k$ elements. We will refer to $m$ as a model and to $\mathcal{M}$ as the model space. With every model $m \in \mathcal{M}$, we associate a parameter vector $\theta_{m}$ in a parameter space $\bigTheta_{m}$. 




\paragraph{Bayesian model selection:} We assume that $m$ comes from a distribution $p(m)$ --- the model prior --- and that $\theta_{m}$ comes from a distribution $p(\theta_{m}  \mid  m)$ --- the prior on the parameter vector given the model. Given $m$ and $\theta_{m}$, we now observe a data vector $y$ from a distribution $p(y \mid m, \theta_{m})$ --- the likelihood. We can now formulate the joint posterior
\begin{equation}
p(m, \theta_{m}  \mid  y) = \frac{p(y  \mid  m, \theta_{m}) p(\theta_{m}  \mid  m)p(m)}{p(y)}.
\label{eq:joint_posterior}
\end{equation}
Here, $p(y) := \sum_{m} \int_{\bigTheta_{m}} p(y  \mid  m, \theta_{m}) p(\theta_{m}  \mid  m)p(m) \mathrm{d}\theta_{m}$
is the normalizing constant.

In many applications, one is more interested in the models $m$ than in the associated parameters $\theta_{m}$. In that case, it makes sense to integrate out the model parameters to obtain the marginal likelihood 
\begin{equation*}
    p(y  \mid  m) = \int_{\theta_{m} \in \bigTheta_{m}} p(y  \mid  m, \theta_{m}) p(\theta_{m}  \mid  m)~\mathrm{d}\theta_{m}.
\end{equation*}
The marginal likelihood is often referred to as the model evidence. The integral is sometimes available in closed-form -- mostly when selecting appropriate parameter priors $p(\theta_{m}  \mid  m)$. In most cases, however, the marginal likelihood needs to be approximated. Traditionally, the Bayesian Information Criterion \citep{Schwarz1978} and the Laplace method \citep{Tierney1986} are popular choices for this approximation. Recently, the pseudo-likelihood \citep{besag1975} has attracted renewed interest \citep{Mohammadi2024,Marsman2022}. With the marginal likelihood, one can formulate the model posterior
\begin{equation}
    p(m  \mid  y) = \frac{p(y  \mid  m) p(m)}{p(y)}.
\label{eq:marginal_posterior}
\end{equation} 
Bayesian model inference aims to understand either the joint posterior \eqref{eq:joint_posterior} or the model posterior \eqref{eq:marginal_posterior}. In many cases, both are challenging due to the normalizing constant $p(y)$. But even if the normalizing constant is available in closed form, computing the posterior for all $m \in \mathcal{M}$ is infeasible when the model space $\mathcal{M}$ is too large.

This is where MCMC methods come in useful. They circumvent the computation of the normalizing constant and are likely to compute the posterior only for those models that have a higher posterior probability. Moreover, MCMC methods come with a convenient property: they provide inference not only on the posterior, but also on any characteristic of the posterior. For example, the posterior probability that $m_i = 1$ for any $i =1,\ldots,k$, or the posterior probability that $\theta_{m} \in A$, for any set $A \subseteq \bigTheta_{m}$. MCMC methods come in two flavors: discrete-time MCMC methods and continuous-time MCMC methods.


\paragraph{Discrete-time Markov chains (DTMC):} A time-homogeneous DTMC is defined by the $l \times l$ row-stochastic transition probability matrix $P$ with elements $P(m, m')$ for all $m,m' \in \mathcal{M}$, where $\sum_{m'}P(m, m') = 1$. A DTMC explores the model space $\mathcal{M}$, each time moving from a model $m^{(s)}$ to a new model $m^{(s+1)}$ with probability $P(m^{(s)}, m^{(s+1)})$. A DTMC spends exactly one unit of time in each model before moving on to the next (with possible self-transitions). Hence, the classifier ``discrete time''. Under the right conditions, the chain $\{ m^{(s)} \}$ converges to a limiting distribution, denoted by $\pi(m)$. These conditions are irreducibility, aperiodicity, and the balance condition \citep[Theorem 1]{Tierney1994}. The balance condition is typically replaced by the detailed balance condition with respect to $\pi(m)$, that is,
\begin{equation}
    \pi(m)P(m, m')= \pi(m')P(m', m) \quad \forall m,m' \in \mathcal{M}.
    \label{eq:discrete_time_db}
\end{equation}
An MCMC chain that has detailed balance is called reversible. Though detailed balance is more restrictive than the general balance condition, it is typically convenient to impose. 

DTMCs are useful for Bayesian posterior inference, because one can set $P$ such that the detailed balance condition \eqref{eq:discrete_time_db} holds with respect to the model posterior $p(m  \mid  y)$. The resulting limiting distribution $\pi(m)$ then equals the posterior distribution $p(m  \mid  y)$ in \eqref{eq:marginal_posterior}, and the Markov chain samples $(m^{(1)},m^{(2)},\ldots)$, then converge to samples from $p(m  \mid  y)$. The Metropolis-Hastings algorithm by \citet{Hastings1970} and the Gibbs sampler by \citet{Geman1984} are two well-known examples of DTMCs where $P$ satisfies the detailed balance conditions.

\paragraph{Continuous-time Markov chains (CTMC):} A CTMC is defined by the $l \times l$ rate matrix $Q$ with elements $Q(m, m')$ for all $m,m' \in \mathcal{M}$ \citep{Anderson1991}. A CTMC explores the parameter space $\mathcal{M}$, each time moving from a state $m$ to a new state according to the following process. It samples, for every $m' \not = m$, a value from an exponential distribution with rate $Q(m, m')$. The chain then moves to the state $m'$ that corresponds to the lowest sampled value. This value is called the waiting time and represents the time the chain spends in each state before moving to the next. The waiting time is continuous and different for each state, hence the classifier ``continuous time''. Again, the detailed balance condition with respect to $\pi(m)$ is a convenient tool to design such a CTMC. It is expressed as
\begin{equation}
    \pi(m) Q(m,  m')= \pi(m')Q(m', m) \quad \forall m,m' \in \mathcal{M}.
     \label{eq:continuous_time_db}
\end{equation}
Note the similarity between \eqref{eq:discrete_time_db} and \eqref{eq:continuous_time_db}. Like a DTMC, we can design the rate matrix $Q$ such that detailed balance \eqref{eq:continuous_time_db} holds with respect to the model posterior $p(m  \mid  y)$ in \eqref{eq:marginal_posterior}. The resulting limiting distribution $\pi(m)$ then equals $p(m  \mid  y)$. Running the CTMC long enough will then result in samples from the posterior distribution.

Often, the size of the parameter space $\mathcal{M}$ is too large and computing the rate $Q(m, m')$ for all $m, m' \in \mathcal{M}$ is computationally infeasible. This is where so-called birth-death (BD) processes come in. BD processes are a class of CTMCs that set all rates $Q(m,  m')$ to zero, except when $m$ and $m'$ are ``neighbors''. Two states $m = (m_1,\ldots,m_k)$ and $m' = (m_1',\ldots,m_k')$ are neighbors when they differ in only one element $i \in \{1,\ldots,k\}$. Let the vector $m^i$ denote the state that differs from $m$ only in the $i$th element. Note that, since we deal with a binary model space, $m^i_i = 1 - m_i$. For a BD process, the detailed balance condition \eqref{eq:continuous_time_db} therefore simplifies to 
\begin{equation}
    \pi(m)Q(m, m^i) = \pi(m^{i})Q(m^i, m) \quad \forall m \in \mathcal{M}, i = 1,\ldots,k.
\label{eq:continuous_time_birth_death_db}
\end{equation}

A practical rejection-free birth-death sampler can be created as formalized in the following lemma.
\begin{lemma}
\label{lem:birth-death}
Let $\mathcal{M} = \{0,1\}^k$ be a binary model space. A CTMC with rates 
\begin{equation}
\label{eq:rates_birth_death}
Q(m, m') = \begin{cases} q_i(m), & \text{if } m' = m^i \text{ for some $i \in \{1,\cdots,k\}$};\\
0, & \text{otherwise},
\end{cases}
\end{equation}
where
\begin{equation}
\label{eq:rates_birth_death_q}
    q_i(m) = \min\left\{ 1,\frac{p(m^i \mid y)}{p(m \mid y)} \right\}
\end{equation} 
has the posterior $p(m \mid y)$ as its invariant distribution.
\end{lemma}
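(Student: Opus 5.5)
The plan is to verify the detailed balance condition \eqref{eq:continuous_time_birth_death_db} for the rates \eqref{eq:rates_birth_death}--\eqref{eq:rates_birth_death_q} with $\pi(m) = p(m \mid y)$. Detailed balance for a CTMC implies global balance, i.e.\ $\pi Q = 0$ with the convention that the diagonal entries are $Q(m,m) = -\sum_{m' \neq m} Q(m,m')$. Global balance is exactly the statement that $\pi$ is invariant. For pairs $m, m'$ that are not neighbours, both $Q(m,m')$ and $Q(m',m)$ vanish, so \eqref{eq:continuous_time_db} holds trivially. Hence it suffices to treat neighbouring pairs $(m, m^i)$.

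For a neighbouring pair, first note that $(m^i)^i = m$. Therefore the reverse rate is
\[
Q(m^i, m) = q_i(m^i) = \min\{1, p(m \mid y)/p(m^i \mid y)\}.
\]
Because the model posterior is strictly positive on $\mathcal{M}$ (an assumption implicit in the ratio in \eqref{eq:rates_birth_death_q}), the ratios are well defined. The key identity is
\[
a \min\{1, b/a\} = \min\{a, b\} = b \min\{1, a/b\}
\]
for $a, b > 0$. Taking $a = p(m \mid y)$ and $b = p(m^i \mid y)$ gives
\[
\pi(m)\, Q(m, m^i) = \min\{p(m \mid y), p(m^i \mid y)\} = \pi(m^i)\, Q(m^i, m).
\]
This is \eqref{eq:continuous_time_birth_death_db}.

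Finally, I would sum the detailed balance relation over $m$ for fixed $m'$:
\[
\sum_{m \neq m'} \pi(m) Q(m, m') = \pi(m') \sum_{m \neq m'} Q(m', m) = -\pi(m') Q(m', m').
\]
This is $(\pi Q)(m') = 0$, so $\pi$ is invariant. The state space is finite and all rates are bounded by $1$, so the chain is non-explosive and no regularity issues arise. Irreducibility also holds, since every single-flip rate is positive and $\mathcal{M}$ is connected by single flips; it is not needed for invariance, but it gives uniqueness and convergence if desired.

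There is no real obstacle here. The only point needing care is checking that the reverse rate indeed corresponds to flipping the same coordinate $i$, which gives $q_i(m^i)$, and stating the positivity of the posterior.
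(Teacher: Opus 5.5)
Your proof is correct and follows the same route as the paper: verify the detailed balance condition \eqref{eq:continuous_time_birth_death_db} for neighbouring pairs $(m,m^i)$ and conclude that $p(m \mid y)$ is invariant. You additionally spell out two steps the paper leaves implicit, namely the identity $a\min\{1,b/a\}=\min\{a,b\}$ and the summation from detailed balance to $\pi Q = 0$, as well as the positivity assumption needed for the ratios to be defined.
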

\begin{proof}
Using the rates \eqref{eq:rates_birth_death}, the detailed balance conditions \eqref{eq:continuous_time_birth_death_db} hold with respect to the model posterior $p(m \mid y)$. Hence, $p(m \mid y)$ is the invariant distribution of the CTMC.
\end{proof}

This BD process yields a simple and efficient algorithm. The main computational burden at each iteration is the computation of the $k$ rates $q_i(m)$. However, these rates can be computed in parallel. The MCMC chain will change \emph{exactly one element} of the model $m$ at each iteration, thereby offering a rejection-free alternative to the Metropolis-Hastings approach. It is applied to undirected GGMs \citep{Mohammadi2024} and to directed acyclic graphs \citep{Jennings2016}. Moreover, \citet{Chen2023}[Section 3.4] presents a framework containing the same idea and applies it to a low-scale Ising model. 

\paragraph{Bayes factor:} The ratio in \eqref{eq:rates_birth_death_q} can be written as:
\begin{equation*}
    \frac{p(m' \mid y)}{p(m \mid y)} = \frac{p(y  \mid  m')}{p(y  \mid  m)} \frac{p(m')}{p(m)}.
\end{equation*}
The first ratio on the right-hand side, $\frac{p(y  \mid  m')}{p(y  \mid  m)}$, is called the Bayes factor. It quantifies the evidence in favor of one model $m'$ over another model $m$. The Bayes factor is rarely available in closed form, because the marginal likelihoods it comprises aren't either; however, accurate closed-form approximations have appeared in many applications, including undirected Gaussian graphical models \citep{Leppaaho2017}, directed Gaussian graphical models \citep{Consonni2012}, Ising models \citep{Marsman2022}, and Bayesian variable selection \citep{Chipman2001}.

\section{Proposed Approach} 
\label{sec:algorithms}
A single iteration of the BD algorithm in Lemma \ref{lem:birth-death} is computationally feasible, because most rates are set to zero. However, the BD algorithm still struggles with large model spaces, because it updates just one element per iteration and is restricted to move to neighboring models only. Modern applications often treat models containing as many as one million elements, in which a BD algorithm needs millions of MCMC iterations, leading to computation and memory-related issues. This is why we now present an extension of the birth-death algorithm. This extension creates a discrete-time Markov chain with superior mixing qualities. In fact, the new DTMC can move across the entire model space in a single iteration. We first present the algorithm and show that it has two variants. Then, we provide a theorem for each variant, showing how accurately each variant targets the posterior distribution $p(m \mid y)$. We end this section with some additional remarks. Our algorithm provides samples from the marginal posterior \eqref{eq:marginal_posterior}, but we show that with a simple extension, it can generate samples from the joint posterior \eqref{eq:joint_posterior} too.

\subsection{Our Algorithm}
\label{sec:algorithm}
Consider a BD process with rates $Q(m, m')$ whose invariant distribution $\pi(m)$ is the posterior distribution $p(m \mid y)$ or an approximation thereof; see Lemma \ref{lem:birth-death} for an example of such a BD process. Our algorithm is a DTMC, derived from this BD process as follows. Let $\varepsilon_1,\varepsilon_2,\ldots,$ be a user-defined sequence with $\varepsilon_s \in (0,1)$ for $s=1,2,\ldots$. Also, for $m \in \mathcal{M}$ and $i \in \{1, \ldots, k\}$, let $q_i(m) := Q(m, m^i) \in (0,1]$. Suppose that in iteration $s$, the DTMC is in some state $m \in \mathcal{M}$. The DTMC moves to the next state by flipping its $i$th element with a probability of $q_i(m)\varepsilon_s$, independently for each $i=1,\ldots,k$. Hence, for arbitrary $m,m' \in \mathcal{M}$, the transition probability at iteration $s$ is given by 
\begin{equation} \label{eq:trans_prob}
    P_{\varepsilon_{s}}(m, m') = \prod_{i \in H_{m, m'}}q_i(m)\varepsilon_s \prod_{i \not \in H_{m, m'}}(1-q_i(m)\varepsilon_s),
\end{equation}
where $H_{m, m'} \subseteq \{1,\ldots,k\}$ is the set of elements in which $m$ and $m'$ differ. 
Note that the transition probability matrix $P_{\varepsilon_{s}}(m, m')$ is well-defined for any $\varepsilon_{s} \in (0, 1)$: we have $P_{\varepsilon_{s}}(m, m') > 0$  for all $m, m' \in \mathcal{M}$ and $\sum_{m'} P_{\varepsilon_{s}}(m, m') = 1$ for all $m \in \mathcal{M}$. The latter can be verified by observing that moving from a state $m$ to a new state $m'$ is equivalent to flipping $k$ different coins, each with a success probability of $q_i(m)\varepsilon_s$, $i=1,\ldots,k$. Hence, the sum of all the possible combinations of coin flips will sum to one.

 Algorithm \ref{alg:DTMC} provides the pseudo-code of our algorithm. As previously stated, Algorithm \ref{alg:DTMC} uses the rates of an existing BD process, and creates a new DTMC that explores the parameter space $\mathcal{M}$ faster. This is because it can update multiple elements in a single iteration. In fact, it can traverse the entire parameter space in a single iteration, because $P_{\varepsilon_{s}}(m, m') > 0$ provided that $\pi(m') > 0$ for any $m, m' \in \mathcal{M}$. In each state, the size of a jump (i.e. the number of elements $m_i$ that are updated) depends on the posterior probability of that state. Hence, we expect that the algorithm makes large jumps initially, but once it reaches an area with a high posterior probability, the jump size decreases, and the chain stabilizes.

\begin{algorithm}[H]
\label{alg:DTMC} 
\caption{Multiple Jump MCMC (MJ-MCMC)}
 \KwIn{ Data $y$, an initial model $m^{(1)}$ and a sequence $\varepsilon_1,\varepsilon_2,\ldots,\varepsilon_S$}
 \For{$s=1,\ldots,S$}{
   Calculate in parallel the rates $q_i(m^{(s)})$ for all $i=1,\ldots,k$ using \eqref{eq:rates_birth_death}\;
   Create $m^{(s+1)}$ by changing each element $m^{(s)}_i$ of $m^{(s)}$ with probability $q_i(m^{(s)})\varepsilon_s$\;
   }
 \KwOut{Samples $(m^{(1)},\ldots,m^{(S)})$} 
\end{algorithm}

Although it is relatively easy to see that Algorithm \ref{alg:DTMC} needs less iterations to explore the space $\mathcal{M}$, it is less obvious to assess how accurately the samples 
$(m^{(1)},\ldots,m^{(S)})$ target the posterior distribution $p(m  \mid  y)$. Section \ref{sec:theory} studies the theoretical properties of Algorithm \ref{alg:DTMC} concerning its accuracy. To that end, we consider two variants in terms of the sequence $\{\varepsilon_s\}_{s \geq 1}$. One variant uses a constant $\varepsilon_{s} = \varepsilon$,  which results in a homogeneous DTMC. A second variant uses a decaying $\{\varepsilon_s\}_{s \geq 1}$, resulting in an inhomogeneous DTMC. 

\subsection{Theoretical Support}

\label{sec:theory}
We investigate the behavior of our MJ-MCMC algorithm. Recall that the DTMC of our algorithm is built upon the reversible BD process with rate matrix $Q$ that has $\pi$ as its invariant distribution. In what follows, we will refer to this BD process as $(X(t))_{t \geq 0}$. We provide a theorem for both the homogeneous and inhomogeneous variants of Algorithm \ref{alg:DTMC}. 




\paragraph{Homogeneous case:}
In the first case, $\varepsilon_{s} = \varepsilon \in (0, 1)$ for all $s \geq 1$. To concretize the discussion, let $\{ M_{\varepsilon}^{(s)} \}_{s \geq 0}$ be the DTMC generated by Algorithm \ref{alg:DTMC} with constant $\varepsilon$. 
Moreover, define the following quantities regarding $(X(t))_{t \geq 0}$ and $\{ M_{\varepsilon}^{(s)} \}_{s \geq 1}$:
\begin{itemize}
\item The continuous-valued waiting time $W_{m}$ as the \emph{random time} the process $(X(t))_{t \geq 0}$ spends in state $m$ until it jumps to a new state,
\item The discrete-valued waiting time $W^{\varepsilon}_{m}$ as the \emph{random number of time steps} the process $\{ M_{\varepsilon}^{(s)} \}_{s \geq 1}$ spends in state $m$ until it jumps to a new state,
\item The \emph{jump probability} $\widetilde{P}(m, m')$ (resp.\ $\widetilde{P}_{\varepsilon}(m, m')$) as the probability that the process $(X(t))_{t \geq 0}$ (resp.\ $\{ M_{\varepsilon}^{(s)} \}_{s \geq 0}$) jumps from $m$ to $m'$ conditional on the event that a transition occurs. 
\end{itemize}
Theorem \ref{thm:homogeneous} below states several facts about the asymptotics of  $\{ M_{\varepsilon}^{(s)} \}_{s \geq 0}$ both in $\varepsilon$ and in time. Its proof is given in Supplementary Material A.2.



\begin{theorem}[Homogeneous DTMC]
\label{thm:homogeneous}
Let $\pi$ be the limiting distribution of the BD process $(X(t))_{t \geq 0}$ and $\pi(m) > 0$ for all $m \in \mathcal{M}$. Let $\pi_{\varepsilon}$ be the stationary distribution of $\{ M_{\varepsilon}^{(s)} \}_{s \geq 1}$. We have the following results:
\begin{enumerate}[label=(\roman*)]
\item $\pi_{\varepsilon}$ is the limiting distribution (in time) of $\{ M_{\varepsilon}^{(s)}\}_{s \geq 1}$.
\item If the BD process is reversible, then $||\pi_\varepsilon-\pi|| = O(\varepsilon)$ for any vector norm $|| \cdot||$.
\item For all states $m \in \mathcal{M}$, we have
\[
\varepsilon W_{m}^{\varepsilon} \overset{\text{d}}{\to} W_{m} \quad \text{as} \quad \varepsilon \to 0.
\]
That is, the waiting times of $\{M_{\varepsilon}^{(s)}\}_{s \geq 0}$, scaled by $\varepsilon$, converge, in distribution, to the waiting times of $( X(t))_{t \geq 0}$.
\item For all $m, m' \in \mathcal{M}$, we have
\[
\lim_{\varepsilon \to 0} \widetilde{P}_{\varepsilon}(m, m') = \widetilde{P}(m, m').
\]
That is, the jump probabilities of $\{ M_{\varepsilon}^{(s)}\}_{s \geq 0}$ converge to the jump probabilities of $(X(t))_{t \geq 0}$.
\end{enumerate}
\end{theorem}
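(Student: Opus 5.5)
For part (i), I will use the remark after \eqref{eq:trans_prob}: $P_\varepsilon(m,m')>0$ for all $m,m'$, since $q_i(m)\varepsilon\in(0,1)$. So the homogeneous chain on the finite space $\mathcal{M}$ is irreducible and aperiodic; the self-loop probability $\prod_i(1-q_i(m)\varepsilon)$ is positive. The ergodic theorem for finite chains then gives a unique stationary distribution $\pi_\varepsilon$, and $\mu P_\varepsilon^s\to\pi_\varepsilon$ for every initial distribution $\mu$. This is exactly (i).

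For part (ii), I will expand $P_\varepsilon$ in $\varepsilon$. Each entry is a polynomial in $\varepsilon$. For $m'=m^i$ we get $P_\varepsilon(m,m^i)=q_i(m)\varepsilon+O(\varepsilon^2)$. If $|H_{m,m'}|\ge 2$, the entry is $O(\varepsilon^2)$. The diagonal is $1-\varepsilon\sum_i q_i(m)+O(\varepsilon^2)$. Hence
\[
P_\varepsilon = I+\varepsilon Q+\varepsilon^2 R_\varepsilon,
\]
where $Q$ is the generator with diagonal $-\sum_i q_i(m)$ and $R_\varepsilon$ is bounded uniformly for $\varepsilon\in(0,1)$.

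Stationarity $\pi_\varepsilon P_\varepsilon=\pi_\varepsilon$ becomes $\pi_\varepsilon(Q+\varepsilon R_\varepsilon)=0$. Since $\pi Q=0$, subtracting gives
\[
(\pi_\varepsilon-\pi)Q=-\varepsilon\,\pi_\varepsilon R_\varepsilon,
\]
and the right side is $O(\varepsilon)$ because $\pi_\varepsilon$ is a probability vector. Irreducibility of $Q$ (all $q_i>0$) makes $\ker Q^\top$ one-dimensional, spanned by $\pi$. The difference $\pi_\varepsilon-\pi$ lies in the complementary subspace $\{v: v\mathbf 1=0\}$, on which $v\mapsto vQ$ is injective with bounded inverse. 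Applying that inverse gives $\|\pi_\varepsilon-\pi\|=O(\varepsilon)$ in one norm, hence in all norms by finite dimensionality. I expect this inversion step to be the main technical point.

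Reversibility is not strictly needed for this argument. It could instead be used to justify $\pi Q=0$ via \eqref{eq:continuous_time_birth_death_db}, which holds by Lemma~\ref{lem:birth-death}.

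For parts (iii) and (iv), the calculations are direct. Write $\lambda(m)=\sum_i q_i(m)$. The discrete waiting time $W_m^\varepsilon$ is geometric with success probability $p_\varepsilon=1-\prod_i(1-q_i(m)\varepsilon)=\lambda(m)\varepsilon+O(\varepsilon^2)$. Therefore
\[
P(\varepsilon W_m^\varepsilon>t)=(1-p_\varepsilon)^{\lfloor t/\varepsilon\rfloor}\to e^{-\lambda(m)t},
\]
which is the tail of $W_m\sim\mathrm{Exp}(\lambda(m))$; this is (iii). For (iv), the conditional jump probability is
\[
\widetilde P_\varepsilon(m,m')=\frac{P_\varepsilon(m,m')}{p_\varepsilon}.
\]
For $m'=m^i$ this tends to $q_i(m)/\lambda(m)=\widetilde P(m,m^i)$. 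For non-neighbours the numerator is $O(\varepsilon^2)$ and the denominator is of order $\varepsilon$, so the ratio tends to $0=\widetilde P(m,m')$.
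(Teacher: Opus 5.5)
Your proposal is correct. Parts (i), (iii) and (iv) essentially follow the paper's proof. The paper also argues irreducibility and aperiodicity on a finite space, reduces (iii) to the geometric tail $r_\varepsilon(m)^{\lfloor u/\varepsilon\rfloor}$, and computes (iv) as $P_\varepsilon(m,m')/(1-r_\varepsilon(m))$. Your direct $O(\varepsilon^2)/\Theta(\varepsilon)$ bound for non-neighbours is slightly cleaner than the paper's argument that the neighbour probabilities sum to one in the limit.

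For (ii), however, your route is genuinely different. The paper writes $e_\varepsilon=\pi_\varepsilon-\pi$ as the fixed point $e_\varepsilon=e_\varepsilon(I+\varepsilon Q)+(\pi+e_\varepsilon)E_\varepsilon$. It then invokes a Poincar\'e-type lemma, proved from the spectral decomposition of the reversible $Q$, giving $\|e(I+\varepsilon Q)\|\le(1-\varepsilon\beta)\|e\|$ for zero-sum $e$ and $\varepsilon<\varepsilon^*$. Absorbing this contraction yields $\|e_\varepsilon\|\le C\varepsilon/\beta$, with $\beta$ the spectral gap.

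You instead invert $Q$ on $V=\{v:v\mathbf 1=0\}$. This is justified because $Q\mathbf 1=0$ keeps $V$ invariant, and the left kernel $\mathrm{span}(\pi)$ meets $V$ only in $0$ since $\pi\mathbf 1=1$. So $v\mapsto vQ$ is an injective, hence boundedly invertible, map of $V$ to itself.

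What your route buys:
\begin{itemize}
\item It needs only irreducibility of $Q$, not reversibility.
\item It gives the bound for all $\varepsilon\in(0,1)$ rather than only for $\varepsilon<\varepsilon^*$.
\item It avoids the spectral lemma entirely, and with it a subtlety: for row vectors the contraction holds in the $\ell^2(1/\pi)$ norm, where $u\mapsto uQ$ is self-adjoint and $V$ is the orthogonal complement of $\pi$. It does not hold in general in the $\ell^2(\pi)$ norm used in the supplement.
\end{itemize}
In that $\ell^2(1/\pi)$ norm one has $\|(Q|_V)^{-1}\|=1/\beta$, so in the reversible case your constant coincides with the paper's.

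What the paper's approach buys is a one-step contraction estimate that it reuses for Theorem~\ref{thm:inhomogeneous}. There the chain is inhomogeneous, so there is no fixed stationary equation to invert, and one must unwind $r_{s+1}\le(1-\beta\varepsilon_s)r_s+C\varepsilon_s^2$ step by step.
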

Theorem \ref{thm:homogeneous} (i) and (ii) show that, as $\varepsilon \to 0$, the limiting distribution of the DTMC of Algorithm \ref{alg:DTMC} converges to $\pi$, the limiting distribution of the BD process. The convergence of (time-scaled) waiting times and jump probabilities, as shown in (iii) and (iv), shows that, as $\varepsilon$ gets smaller, Algorithm \ref{alg:DTMC} behaves more similarly to the BD process; however, at the expense of slowing down. Selecting $\varepsilon$ is therefore a trade-off between computational efficiency and accuracy. A higher value of $\varepsilon$ directly results in a faster exploration of the model space, since the probability that the element $i$ flips equals $q_i(m) \varepsilon$. However, lower values of $\varepsilon$ result in more proximity to the original BD process. Ultimately, the choice of $\varepsilon$ depends on the desired qualities of the resulting MCMC algorithm. In Section \ref{sec:simulation}, we compare the performance of Algorithm \ref{alg:DTMC} with different values of constant $\varepsilon$ in the case of Gaussian graphical models.



\paragraph{Inhomogeneous case:}
The next theorem treats the inhomogeneous case. Let $\{ M^{(s)} \}_{s \geq 0}$ be the DTMC created by the transition probabilities \eqref{eq:trans_prob} and with $\{\varepsilon_{s}\}_{s \geq 0}$. We show that the limiting distribution of $\{ M^{(s)}\}_{s \geq 0}$ is the same as the invariant distribution $\pi$ of the BD process, provided that the sequence $\{\varepsilon_s\}_{s \geq 1}$ vanishes and its sum  diverges. A proof of Theorem \ref{thm:inhomogeneous} is given in Supplementary Material A.3.

\begin{theorem}[Inhomogeneous DTMC]
\label{thm:inhomogeneous}
Consider a discrete-time MCMC over a binary model space $\mathcal{M} = \{0,1\}^k$ with transition probabilities \eqref{eq:trans_prob}, where $q_i(m)$ are the rates of a birth-death process, for all $m \in \mathcal{M}$ and $i=1,\cdots,k$. Moreover, assume that this birth-death process is irreducible and reversible. If the sequence $\{\varepsilon_s\}_{s \geq 1}$ satisfies 
\begin{equation}\label{eq: conditions for convergence}
\lim_{s \to \infty}\varepsilon_s = 0 \text{ (decay)}  \quad \text{ and} \quad \sum_{s=1}^{\infty}\varepsilon_s = \infty \text{ (divergence)},
\end{equation}
then the birth-death process and the DTMC have the same limiting distribution.
\end{theorem}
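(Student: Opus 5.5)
We prove that the inhomogeneous chain $\{M^{(s)}\}$ is strongly ergodic with limit $\pi$, using the classical theory of inhomogeneous Markov chains on a finite state space. The standard tool is: if each $P_{\varepsilon_s}$ has a stationary distribution $\pi_{\varepsilon_s}$ with $\sum_s \|\pi_{\varepsilon_{s+1}}-\pi_{\varepsilon_s}\|_1<\infty$, the chain is weakly ergodic, and $\pi_{\varepsilon_s}\to\pi$, then $\mu P_{\varepsilon_1}\cdots P_{\varepsilon_s}\to\pi$ for every initial $\mu$. So the proof has three steps, which we carry out in order.

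First, write $P_\varepsilon = I + \varepsilon Q + \varepsilon^2 R(\varepsilon)$, where $Q$ is the BD generator (diagonal $-\sum_i q_i(m)$) and the entries of $R(\varepsilon)$ are polynomials in $\varepsilon$, bounded uniformly for $\varepsilon\in(0,1)$. This follows by expanding the product in \eqref{eq:trans_prob}. The linear term comes exactly from single flips, and all multi-flip terms are $O(\varepsilon^2)$. Reversibility gives $\pi Q=0$, hence $\pi P_\varepsilon = \pi + O(\varepsilon^2)$.

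Second, we show $\pi_\varepsilon\to\pi$. One route is Theorem \ref{thm:homogeneous}(ii). A more self-contained argument uses the identity $\pi_\varepsilon(P_\varepsilon - I)/\varepsilon=0$, i.e. $\pi_\varepsilon(Q+\varepsilon R(\varepsilon))=0$. Irreducibility of $Q$ makes $\pi$ a simple null vector, so by continuity of the normalized kernel $\pi_\varepsilon\to\pi$. Moreover, $\pi_\varepsilon$ is a rational function of $\varepsilon$ (by Cramér/Markov chain tree formula on the polynomial entries of $P_\varepsilon$). Hence each coordinate is eventually monotone as $\varepsilon\downarrow 0$, which gives bounded variation: $\sum_s\|\pi_{\varepsilon_{s+1}}-\pi_{\varepsilon_s}\|_1<\infty$. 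A gap remains here, because $\varepsilon_s$ need not be monotone. To fix it, we plan to note that a rational function on $(0,\varepsilon_0]$ is Lipschitz. We then bound the sum by a comparison with $\sum|\varepsilon_{s+1}-\varepsilon_s|$, or we assume monotone decay, or we use a perturbation argument directly instead of the tool.

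Third, and this is the main obstacle, we establish weak ergodicity via Dobrushin coefficients $\delta(P)=\tfrac12\max_{m,m'}\|P(m,\cdot)-P(m',\cdot)\|_1$. We need $\prod_j(1-\alpha_j)=0$, with $\alpha_j$ the ergodicity coefficient of blocks of transitions. The key is to group steps into blocks $B_j$ with $\sum_{s\in B_j}\varepsilon_s\approx c$, which the divergence condition permits, and the decay condition makes blocks long with small steps. On such a block the product $\prod_{s\in B_j}P_{\varepsilon_s}=\prod(I+\varepsilon_sQ+O(\varepsilon_s^2))$ approximates $e^{cQ}$, with error $O(\max_{s\in B_j}\varepsilon_s)\to0$; this is a standard product-integral estimate. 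Since $Q$ is irreducible, $e^{cQ}$ has all entries positive, so its Dobrushin coefficient is $<1-\eta$ for some $\eta>0$. The product over a late block therefore has coefficient $\le 1-\eta/2$. Infinitely many blocks then give $\prod_j(1-\eta/2)=0$, which is weak ergodicity.

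Combining weak ergodicity, the summability of the $\pi_{\varepsilon_s}$ differences and $\pi_{\varepsilon_s}\to\pi$ yields strong ergodicity with limit $\pi$. Equivalently, we can argue directly: $\|\mu_s-\pi\|$ contracts along blocks and the defect $\pi P_{\varepsilon_s}-\pi=O(\varepsilon_s^2)$ sums to $O(c\max_{B_j}\varepsilon_s)$ per block, which vanishes. This direct block-contraction argument in fact sidesteps the bounded-variation issue of the second step, so we expect to write the final proof that way.
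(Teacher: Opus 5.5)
The direct block-contraction argument you say you will write up is correct, and it takes a genuinely different route from the paper.

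\textbf{The paper's route.} It works one step at a time. From $e_{s+1}=e_s(I+\varepsilon_sQ)+(\pi+e_s)E_s$ it uses reversibility to get a spectral-gap (Poincar\'e-type) contraction $\|e(I+\varepsilon Q)\|\le(1-\beta\varepsilon)\|e\|$ for zero-sum $e$ in a $\pi$-weighted $\ell^2$ norm. This gives $r_{s+1}\le(1-\beta\varepsilon_s)r_s+C\varepsilon_s^2$. An elementary comparison with $z_{s+1}-\delta=(1-\beta\varepsilon_s)(z_s-\delta)$, together with divergence of $\sum_s\varepsilon_s$, then gives $r_s\to0$.

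\textbf{Your route.} You work in total variation instead. Divergence lets you cut time into blocks of intrinsic length about $c$. The product-integral estimate compares each block with $e^{cQ}$. Irreducibility makes $e^{cQ}$ entrywise positive, which yields a uniform Dobrushin contraction $1-\eta/2$ on late blocks. The per-block defect is at most $C\sum_{s\in B_j}\varepsilon_s^2=O(\max_{B_j}\varepsilon_s)\to0$.

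\textbf{What each buys.} The paper's route gives a per-step recursion with an explicit rate governed by the spectral gap of $Q$, and needs no block bookkeeping. Yours uses only irreducibility and $\pi Q=0$, so reversibility is never needed. It also requires no Hilbert-space structure. The paper's one-step contraction is a self-adjointness statement, which for row vectors holds with weights $1/\pi(m)$ rather than the weights $\pi(m)$ used in the supplement.

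You were also right to abandon the Isaacson--Madsen route. For non-monotone $\varepsilon_s$ satisfying decay and divergence, $\sum_s|\varepsilon_{s+1}-\varepsilon_s|$ can diverge. Then $\sum_s\|\pi_{\varepsilon_{s+1}}-\pi_{\varepsilon_s}\|_1$ generally diverges as well, so the Lipschitz comparison cannot cover the theorem as stated.

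In the write-up, state explicitly that $x_{j+1}\le(1-\eta/2)x_j+d_j$ with $d_j\to0$ forces $x_j\to0$. Also control $\|\mu_s-\pi\|_1$ for $s$ strictly inside a block, using $\ell^1$-contractivity of stochastic matrices on row vectors plus the same defect bound.
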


Theorem \ref{thm:inhomogeneous} requires decay and divergence of $\{ \varepsilon_{s} \}_{s \geq 1}$. That still leaves a wide variety of sequences to choose from. As in the homogeneous case, the choice of $\{ \varepsilon_{s} \}_{s \geq 1}$ presents a trade-off between computational efficiency and accuracy of the algorithm. A slowly decaying sequence (e.g., $\varepsilon_s = 1/\ln(s+2)$) mixes fast, but the chain might visit states with low posterior probability, resulting in poor accuracy. A quickly decaying sequence (e.g., $\varepsilon_s = 1/(s\ln s)$) may behave similarly to the original BD process -- accurate, but slow. 

Regardless of the chosen sequence, the consequences of Theorem \ref{thm:inhomogeneous}  are striking. It proves that the DTMC defined by transition probabilities \eqref{eq:trans_prob} has the same limiting distribution as the original birth-death process, namely $\pi$, but without the slow element-by-element updates. The DTMC, therefore, achieves the same quality of inference as the BD process with a remarkable reduction in computational cost. We demonstrate this claim with our thorough experiments (Section \ref{sec:simulation}) and an application in practice (Section \ref{sec:reallife}).

\subsection{Additional discussion and results}
\label{sec:extension}
This section builds on Theorem \ref{thm:homogeneous} and Theorem \ref{thm:inhomogeneous} to present some corollaries and remarks.

\paragraph{A notable quirk of Algorithm \ref{alg:DTMC}:} At every iteration $s$, our MJ-MCMC algorithm can jump to any state $m'$ with positive posterior probability, since $P_{\varepsilon_{s}}(m,  m') > 0$ whenever $\pi(m') > 0$. This comes with the evident advantage of excellent mixing. However, it also yields a noteworthy observation: our DTMC can visit states $m'$ with very low posterior probability, particularly for larger $\varepsilon_s$. Imagine for example the simple two-dimensional model space $\mathcal{M} = \{0,1\}^2$ and suppose that the posterior probabilities are $\pi(1, 1) = 0.01$, and $\pi(0, 0) = \pi(0, 1) = \pi(1, 0) = 0.33$. An MCMC algorithm targeting $\pi$ should therefore rarely visit the state $(1, 1)$. For example, the transition rates to $(1, 1)$ of the BD process in Lemma \ref{lem:birth-death} are $Q((0, 0), (1, 1)) = 0$ and $Q((0, 1), (1, 1)) = Q((1, 0), (1, 1)) = 0.03$. Our DTMC, however, is more likely to visit such a state, especially for higher $\varepsilon_s$.  For example, when in state $(0, 0)$ and when $\varepsilon_s = 0.9$, our DTMC has a chance of $P_{\varepsilon_s}((0, 0), (1, 1)) = \varepsilon_s^2 = 0.81$ of moving from $(0, 0)$ to $(1, 1)$. At first glance, this might seem problematic. However, note that Theorem \ref{thm:homogeneous} suggests a small value for the constant $\varepsilon$ and Theorem \ref{thm:inhomogeneous} requires $\lim_{s\ \to \infty} \varepsilon_s = 0$. These conditions ensure that, in the long run, jumps to states with low posterior probabilities become infrequent, and that the DTMC's limiting distribution will get arbitrarily close to that of the BD process. 

\paragraph{An exact (and slower) algorithm:}One could modify Algorithm \ref{alg:DTMC} with constant $\varepsilon$ to obtain an alternative (but slower) MCMC algorithm that exactly targets $\pi$. This alternative MCMC algorithm is an MH algorithm that uses $P_{\varepsilon_{s}}(m, m')$ \eqref{eq:trans_prob} to propose a new model $m'$ from the current state $m$. This proposed model $m'$ can then be accepted with a probability of 
\[
\alpha(m, m') := \min\left\{ 1, \frac{\pi(m') P_{\varepsilon_s}(m', m)}{\pi(m) P_{\varepsilon_s}(m,  m')} \right\},
\]
or rejected otherwise. The accept/reject step renders the algorithm exact, but requires extra computational cost.

\paragraph{A special case of convergence:} 
With homogeneous transition probabilities (i.e. $\varepsilon_s = \varepsilon$), the limiting distribution of Algorithm \ref{alg:DTMC} converges to the invariant distribution of the BD process as $\varepsilon \to 0$. There is a special case of Bayesian model inference, however, where both limiting distributions are the same for \textit{any} $\varepsilon \in (0,1)$. This special case requires the model posterior to be factorizable, i.e., we require
\begin{equation}
    p(m \mid y) = \prod_{i=1}^k p(m_i \mid y). \label{eq:fact_post}
\end{equation}
We can now state this result formally using the following lemma, a proof of which is given in Supplementary Material A.4.
\begin{lemma}
\label{lem:fact_post}
Consider a binary model space $\{0,1\}^k$ and a factorizable posterior $p(m \mid y)$, as in \eqref{eq:fact_post}. Consider a BD process as in Lemma \ref{lem:birth-death} with rates $q_i(m)$. Moreover, consider a homogeneous, discrete-time MCMC with transition probabilities \eqref{eq:trans_prob}, with $\varepsilon_s = \varepsilon \in (0,1)$ for all $s$, and where $q_i(m)$ are the rates of the BD process. Then, for any $\varepsilon \in (0,1)$, the DTMC has the same invariant distribution as the BD process.
\end{lemma}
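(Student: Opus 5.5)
The plan is to show that under the factorization \eqref{eq:fact_post} the rates decouple, so that the DTMC with kernel \eqref{eq:trans_prob} is a product of $k$ independent two-state chains. Each of these two-state chains is reversible with respect to the corresponding marginal $p(m_i \mid y)$, and the product measure will then be invariant.

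\textbf{Step 1: the rates depend on one coordinate only.} Write $\pi_i(a) := p(m_i = a \mid y)$ for $a \in \{0,1\}$. By \eqref{eq:fact_post}, all factors except the $i$th cancel in the ratio, so
\[
\frac{p(m^i \mid y)}{p(m \mid y)} = \frac{\pi_i(1-m_i)}{\pi_i(m_i)}.
\]
By \eqref{eq:rates_birth_death_q}, $q_i(m) = \min\{1, \pi_i(1-m_i)/\pi_i(m_i)\} =: r_i(m_i)$, which depends on $m$ only through $m_i$. Substituting into \eqref{eq:trans_prob} gives
\[
P_\varepsilon(m,m') = \prod_{i=1}^k T_i(m_i, m_i'),
\]
where $T_i(a,1-a) = \varepsilon r_i(a)$ and $T_i(a,a) = 1 - \varepsilon r_i(a)$. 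Each $T_i$ is a valid $2\times 2$ stochastic matrix, since $r_i \le 1$ and $\varepsilon<1$.

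\textbf{Step 2: each factor is reversible.} For each $i$, check detailed balance $\pi_i(0)\,\varepsilon r_i(0) = \pi_i(1)\,\varepsilon r_i(1)$. Both sides equal $\varepsilon \min\{\pi_i(0),\pi_i(1)\}$, which is exactly the computation in the proof of Lemma \ref{lem:birth-death}, now multiplied by $\varepsilon$.

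\textbf{Step 3: the product measure is invariant.} Multiplying the coordinatewise identities gives
\[
\pi(m)P_\varepsilon(m,m') = \prod_i \pi_i(m_i)T_i(m_i,m_i') = \prod_i \pi_i(m_i')T_i(m_i',m_i) = \pi(m')P_\varepsilon(m',m).
\]
So the DTMC satisfies \eqref{eq:discrete_time_db} with respect to $\pi = p(\cdot \mid y)$. Summing over $m$ then shows $\pi$ is invariant for $P_\varepsilon$. By Lemma \ref{lem:birth-death}, $\pi$ is also the invariant distribution of the BD process. Since $P_\varepsilon(m,m')>0$ for all $m,m'$ when the posterior is positive, the chain is irreducible and the invariant distribution is unique. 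This settles the claim for every $\varepsilon\in(0,1)$.

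\textbf{Main obstacle.} The only real point is Step 1, namely recognizing that the factorization makes $q_i(m)$ depend only on $m_i$, so that the kernel factorizes. Two edge cases need care. If some $\pi_i(a)=0$, the ratio must be handled by convention or positivity assumed, as in Theorem \ref{thm:homogeneous}. If Lemma \ref{lem:fact_post} is meant for general BD rates rather than those of Lemma \ref{lem:birth-death}, one would need rates of the form $q_i(m) = g(\text{ratio})$ with $g(x) = x\,g(1/x)$; the same coordinatewise argument applies verbatim in that case.
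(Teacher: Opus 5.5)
Your proof is correct and follows the same route as the paper: both first show that, under the factorization, $q_i(m)$ depends on $m$ only through $m_i$, and then verify detailed balance of $P_\varepsilon$ with respect to $\pi$. The difference is only bookkeeping. The paper splits $P_\varepsilon(m,m')$ into the factors with $i\notin H_{m,m'}$ and those with $i\in H_{m,m'}$, and telescopes the BD relation $\pi(m)q_i(m)=\pi(m^i)q_i(m^i)$ along the path of flips. You instead factor both $\pi$ and $P_\varepsilon$ into two-state pieces and multiply the coordinatewise balance equations; this is cleaner, and your positivity remark also gives uniqueness of the invariant distribution, which the paper does not address.
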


Using a BD process or Algorithm \ref{alg:DTMC} to recover a factorizable posterior \eqref{eq:fact_post} is a form of over-engineering, since one can instead decompose the posterior into $k$ element-wise posteriors $p(m_i \mid y)$, each of which can be obtained easily using a two-state ($m_i = 0$ or $m_i = 1$) Markov chain. Lemma \ref{lem:fact_post} is therefore not meant as a proposal for a fast algorithm, but instead to provide intuition to the workings of Algorithm \ref{alg:DTMC}. The factorizable posterior ensures that the elements $m_1,\cdots,m_k$ are independent. This element-wise posterior independence is exactly why the homogeneous variant of Algorithm \ref{alg:DTMC} works for all $\varepsilon \in (0,1)$, since Algorithm \ref{alg:DTMC} updates the elements $m_i$ independent of each other. In most cases, however, posteriors are not factorizable. That is why, in the general case, we need either the condition $\varepsilon \to 0$ (Theorem \ref{thm:homogeneous}) or both decay and divergence conditions (Theorem \ref{thm:inhomogeneous}).\\

\noindent \textbf{Extension to joint posterior:} So far, we have presented an extension to the BD process (Lemma \ref{lem:birth-death}) that creates a superior DTMC (Algorithm \ref{alg:DTMC}). Both the BD process and our DTMC are suitable for inference on the model posterior \eqref{eq:marginal_posterior}, but not to learn the parameters $\theta_{m}$ associated with each model. That is, the algorithms are not immediately applicable for inference on the joint posterior \eqref{eq:joint_posterior}. The step to joint inference, however, is straightforward. In fact, the next lemma presents a simple strategy to use the samples $(m^{(1)},\ldots,m^{(S)})$ from Algorithm \ref{alg:DTMC} to obtain samples $((m^{(1)},\theta^{(1)}_{m}),\ldots,(m^{(S)},\theta^{(S)}_{m})$ from the joint posterior \eqref{eq:joint_posterior}.

\begin{lemma}
\label{lem:sample_ks}
Let $m^{(1)},m^{(2)},\ldots$ be the states of a discrete-time Markov chain with limiting distribution $\pi(m)$. If we now sample the parameter vectors $\theta^{(1)},\theta^{(2)},\ldots$ from a distribution $\pi(\theta  \mid  m^{(s)})$ for all $s=1,2,..$. Then, the resulting samples $(m^{(1)},\theta^{(1)}), (m^{(2)},\theta^{(2)}), \ldots$ are samples from a Markov chain with limiting distribution $\pi(m)\pi(\theta  \mid  m)$.
\end{lemma}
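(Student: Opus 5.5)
The plan is to show directly that the augmented process $Z^{(s)} := (m^{(s)},\theta^{(s)})$ is a Markov chain. Its one-step kernel is the product of the model-chain kernel and the conditional parameter draw. Convergence of its marginal law then reduces to convergence of the marginal law of $m^{(s)}$, which holds by assumption.

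First, we fix the construction. Given the past $(m^{(1)},\theta^{(1)}),\ldots,(m^{(s)},\theta^{(s)})$, the next model $m^{(s+1)}$ is drawn from $P(m^{(s)},\cdot)$, or from $P_{\varepsilon_s}(m^{(s)},\cdot)$ in the inhomogeneous case. This draw depends only on $m^{(s)}$, since $\theta^{(s)}$ plays no role in Algorithm \ref{alg:DTMC}. Then $\theta^{(s+1)}$ is drawn from $\pi(\theta \mid m^{(s+1)})$, independently of everything else given $m^{(s+1)}$. The conditional law of $Z^{(s+1)}$ given the whole past is therefore
\[
K\bigl((m,\theta),(m',\mathrm{d}\theta')\bigr) = P(m,m')\,\pi(\mathrm{d}\theta' \mid m'),
\]
which depends on the past only through $Z^{(s)}=(m,\theta)$, and in fact only through $m$. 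This establishes the Markov property, possibly time-inhomogeneous, on the state space $\bigcup_m \{m\}\times\bigTheta_m$.

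Second, we identify the limit. For any measurable $A\subseteq \bigTheta_{m'}$, the joint law factorizes by construction:
\[
\Pr\bigl(m^{(s)}=m',\ \theta^{(s)}\in A\bigr) = \Pr\bigl(m^{(s)}=m'\bigr)\,\pi(A \mid m').
\]
Since $\mathcal{M}$ is finite and $\Pr(m^{(s)}=m')\to \pi(m')$ for each $m'$ by hypothesis, the right-hand side converges to $\pi(m')\pi(A\mid m')$. Summing over $m'$ gives total-variation convergence of the law of $Z^{(s)}$ to $\pi(m)\pi(\theta\mid m)$, with the bound
\[
\bigl\|\mathcal{L}(Z^{(s)}) - \pi(m)\pi(\theta\mid m)\bigr\|_{TV} \le \sum_{m'}\bigl|\Pr(m^{(s)}=m')-\pi(m')\bigr|.
\]
As a side remark, in the homogeneous exact case stationarity is also immediate: $\sum_m \pi(m)P(m,m')\pi(\mathrm{d}\theta'\mid m') = \pi(m')\pi(\mathrm{d}\theta'\mid m')$.

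The only delicate point is interpretive rather than technical. We must make precise that "limiting distribution" refers to convergence of the marginal laws, uniformly over the starting state or for the given initial law. We also need the $\theta$-draws to be conditionally independent of the past given the current model, so that the factorization above holds exactly at every $s$. Once these are fixed, finiteness of $\mathcal{M}$ makes the convergence step trivial. No integrability conditions on $\pi(\theta\mid m)$ are needed beyond it being a probability kernel.
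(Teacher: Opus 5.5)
Your proposal is correct and follows the same route as the paper: the law of $(m^{(s)},\theta^{(s)})$ factorizes as $\mu_s(m)\,\pi(\theta \mid m)$, and the convergence $\mu_s(m)\to\pi(m)$ on the finite space $\mathcal{M}$ carries over directly to the joint law. Beyond the paper, you verify the Markov property of the pair through the kernel $P(m,m')\,\pi(\mathrm{d}\theta' \mid m')$ and give an explicit total-variation bound, both of which the paper's proof leaves implicit.
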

\begin{proof}
Let $\mu_s(m)$ denote the probability that the Markov chain with states $m^{(1)},m^{(2)},\ldots$ is in state $m$ at iteration $s$. We have $\lim_{s \to \infty} \mu_s(m) = \pi(m)$. Let $\mu_s(m,\theta)$ denote the probability that the Markov chain with states $(m^{(1)},\theta^{(1)}), (m^{(2)},\theta^{(2)}), \ldots$, is in state $(m,\theta)$ in iteration $s$. Then, we obtain
\begin{equation*}
\lim_{s \to \infty}\mu_s(m,\theta) = \lim_{s \to \infty}\mu_s(m)\pi(\theta  \mid  m) = \pi(m)\pi(\theta  \mid  m).
\end{equation*}
Hence, the limiting distribution of the Markov chain with states $(m^{(1)},\theta^{(1)}), (m^{(2)},\theta^{(2)}), \ldots$ is $\pi(m)\pi(\theta  \mid  m)$.
\end{proof}
 \noindent If $\pi(m) = p(m  \mid  y)$ and $\pi(\theta  \mid  m) = p(\theta  \mid  m, y)$, then the Markov chain with states $(m^{(1)},\theta^{(1)}), (m^{(2)},\theta^{(2)}),\ldots$ has limiting distribution $\pi(m)\pi(\theta  \mid  m) = p(m  \mid  y)p(\theta  \mid  m, y) = p(\theta,m  \mid  y)$, which is exactly the required joint posterior.

\section{Applications}
\label{sec:applications}
The MJ-MCMC algorithm (Algorithm \ref{alg:DTMC}) can be used for Bayesian model inference on problems with a binary model space. In Supplementary Material B, we apply our approach to Bayesian variable selection and Ising models. In this section, we present a different example: undirected Gaussian graphical models (GGMs). GGMs are a good starting point, because it is a relatively well-known field. In fact, the BD process in Lemma \ref{lem:birth-death} is already proposed by \citet{Mohammadi2024}. This subsection reviews those existing results and presents our algorithm as an extension of this BD process.


\paragraph{Preliminaries:} GGMs assume that the data $y$ come from a multivariate normal distribution \citep{lauritzen1996}. More precisely, $y$ is an $n \times p$ matrix containing $n$ observations of the variables $Y_1,\ldots,Y_p$. These variables follow a multivariate normal distribution with unknown $p \times p$ covariance matrix $\Sigma$ and unknown precision matrix $K := \Sigma^{-1}$. The matrix $K$, with elements $k_{ij}$, encodes the conditional dependency between any two variables $Y_i$ and $Y_j$, because 
\begin{equation*}
    k_{ij} = 0 \iff Y_i \text{ and } Y_j \text{ are conditionally independent}. 
\end{equation*}
These conditional (in)dependence relationships can be depicted in a graph $G$ with $p$ nodes corresponding to $Y_1,\ldots,Y_p$. An undirected edge $(i,j)$ is included if and only if the variables $Y_i$ and $Y_j$ are conditionally dependent. We are interested in the graph $G$ that could have generated these data $y$. Bayesian inference focuses on uncovering the posterior distribution of this graph. That is, 
\begin{equation*}
  p(G  \mid  y) = \frac{p(y  \mid  G ) p(G)}{p(y)}.
\end{equation*}
If we denote the presence of an edge with a $1$ and the absence of an edge with a 0, we can encode every graph as a binary vector of length $k=p(p-1)/2$. The graphs $G$ are therefore binary ``models'', and the space of all $p \times p$ graphs is a binary model space. The precision matrix $K$ is the model-associated parameter. Given a graph $G$, the prior on the precision matrix is denoted by $p(K  \mid  G)$. Popular choices include the G-wishart prior \citep{roverato2002hyper} or the spike-and-slab prior \citep{wang2015}. See \citet{vogels2024} for a comprehensive review on Bayesian structure learning in GGMs.

\paragraph{BD process:} Let $G$ and $G^e$ be two graphs that differ only in the edge $e = (i,j)$. Under the G-Wishart prior, an approximation of the Bayes factor between $G$ and $G^e$ exists \citep{Leppaaho2017}. It is given by 
\begin{equation}
\label{eq:GGM_1}
\frac{p(y \mid G^e)}{p(y \mid G)} \approx \frac{\hat{p}(y \mid G^e)}{\hat{p}(y \mid G)} := \frac{p(y_i \mid y_{nb(i,G^e)})
p(y_j \mid y_{nb(j, G^e)})}{p(y_i \mid y_{nb(i,G)}) p(y_j \mid y_{nb(j,G)})},
\end{equation}
where $y_h$ is a vector containing the $n$ observations of variable $h$, $nb(h,G)$ refers to the set of neighbors of node $h$ with respect to $G$ and $y_{A}$ is the sub-matrix of $y$ corresponding to the nodes that are in a set $A$. This approximation is remarkably cheap to evaluate, because it only requires the evaluation of the four terms $p(y_h \mid y_{nb(h,G)})$, each of which can be obtained quickly using another approximation \citep{Leppaaho2017}[Equation 9]. \citet{Mohammadi2024} use equation \eqref{eq:GGM_1} to design a BD process as described in Lemma \ref{lem:birth-death} with rates 

\begin{equation*}
Q(G, G') = \begin{cases} q_e(G), & \text{if } G' = G^e \text{ for some edge $e$};\\
0, & \text{otherwise,}
\end{cases}
\end{equation*}
where 
\begin{equation}
q_e(G) := \min \left\{1,\frac{\hat{p}(y \mid G^e)}{\hat{p}(y \mid G)} \frac{p(G^e)}{p(G)} \right\}.
\label{eq:rates_GGM}
\end{equation}

The algorithm of \citet{Mohammadi2024} outperforms other state-of-the-art algorithms, but still takes more than an hour to solve $1000$-variable problems (see Section \ref{sec:simulation}), mostly because the algorithm -- being a BD process -- updates only one edge per iteration.

\paragraph{Our algorithm:} Our work allows us to extend the BD process of \citet{Mohammadi2024} to a DTMC that does allow multiple edge updates per iteration. Our DTMC will have transition probabilities \eqref{eq:trans_prob} with $q_i(m) := q_e(G)$ as in \eqref{eq:rates_GGM}. By Theorem \ref{thm:inhomogeneous}, the DTMC has the same limiting distribution as the BD process, provided that the sequence $\{\varepsilon_{s}\}_{s \geq 1}$ satisfies \eqref{eq: conditions for convergence}. 
Moreover, similar to \citep{Mohammadi2024}[Sectiopn 4.3], we can sample precision matrices $K^{(s)}$ from $p(K  \mid  G^{(s)}, y)$ for $s=1,\ldots,S$. Due to Lemma \ref{lem:sample_ks}, the resulting samples $(G^{(1)},K^{(1)}),\ldots,(G^{(S)},K^{(S)})$ are then samples from the joint posterior $p(K, G  \mid  y)$. We tested our approach in a simulation study (Section \ref{sec:simulation}) and on real-life data (Section \ref{sec:reallife}). 

\section{Numerical Experiments}\label{sec: Numerical Experiments}

This section demonstrates the performance of the MJ-MCMC approach in practice. We apply our algorithm to the field of undirected Gaussian graphical models, using a simulation study (Section \ref{sec:simulation}) and a real-life data set (Section \ref{sec:reallife}). 

\subsection{Simulation study}
\label{sec:simulation}

This section highlights the scalability of our approach in a simulation setting. We compare our approach (Algorithm \ref{alg:DTMC}) to the state-of-the-art algorithm in Bayesian structure learning in Gaussian graphical models: the BD-MPL algorithm \citep{Mohammadi2024}, which is up to $10$ times faster than the competing algorithms on large-scale instances ($p=1000$) \citep{vogels2024}. We test our algorithm with homogeneous transition probabilities, using $\varepsilon \in \{0.001, 0.01, 0.3, 0.7, 0.95\}$, and inhomogeneous transition probabilities \eqref{eq:trans_prob} where we use the slowly and quickly decaying sequences
\begin{equation}
\varepsilon_s = 0.3\frac{1}{\log_{10}(s+9)} \quad \text{ and } \quad \varepsilon_s = 0.3\left(\frac{1}{s\log_{2}(s+1)}\right)^{0.4} \text{ for } s=1,2\ldots
\label{eq:sequence_decay}
\end{equation}
respectively. Figure \ref{fig:decaying} shows both sequences. We report our results on an instance with $p=1000$ variables, $n=400$ observations, and edge density of $0.2\%$. This edge density might seem low, but note that under this density, nodes are connected to two other nodes on average. The experiments were repeated eight times. Other simulation settings are described in Supplementary Material C.

\begin{figure}[!ht] 
    \centering
\includegraphics[width=0.5\linewidth]{ 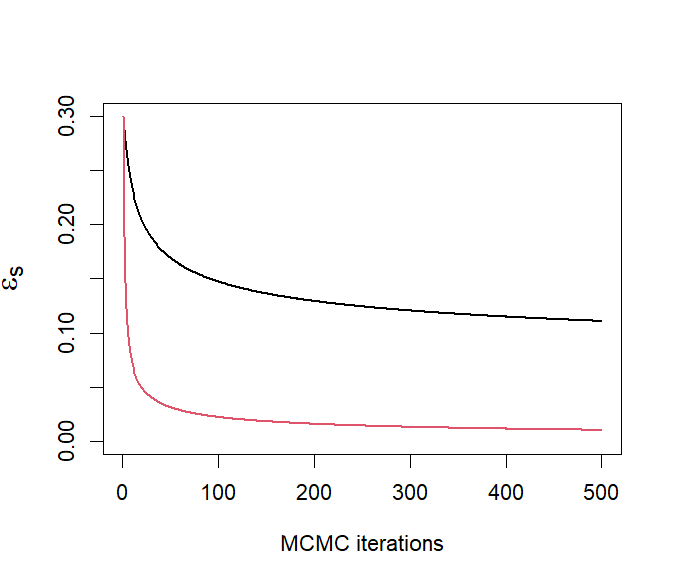}
\caption{\textit{The decay during the first $500$ iterations for a quickly (red) and slowly (black) decaying sequence.
}}
\label{fig:decaying} 
\end{figure}

For each algorithm, we use the generated Markov Chain to calculate the posterior edge inclusion probability. This value represents the probability that, given the data, an edge $(i,j)$ is included in the graph. We then assess the quality of these edge inclusion probabilities using the AUC-PR, i.e., the area under the precision-recall curve. The AUC-PR ranges from zero (worst) to one (best).

\begin{figure}[!ht] 
    \centering
    \begin{tabular}[t]{cc}
        \begin{subfigure}[t]{0.5\textwidth}
            \centering            \includegraphics[width=1\linewidth]{ 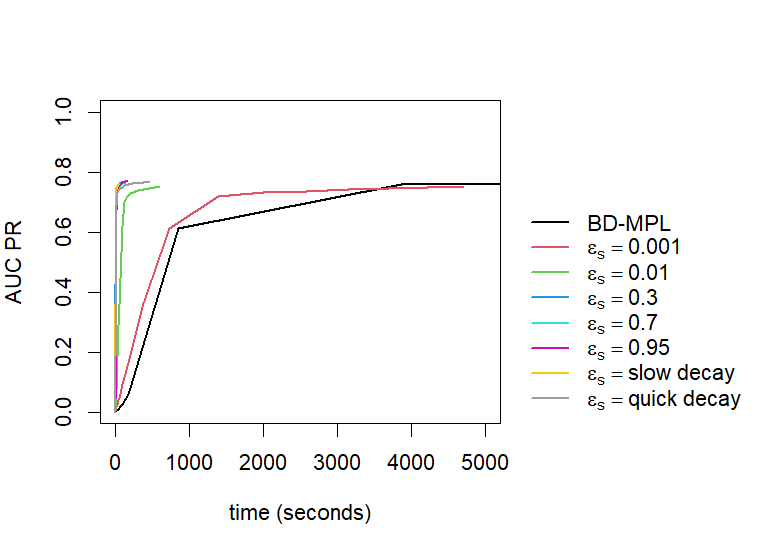}
        \end{subfigure} &
        \begin{subfigure}[t]{0.5\textwidth}
            \centering            \includegraphics[width=1\textwidth]{ 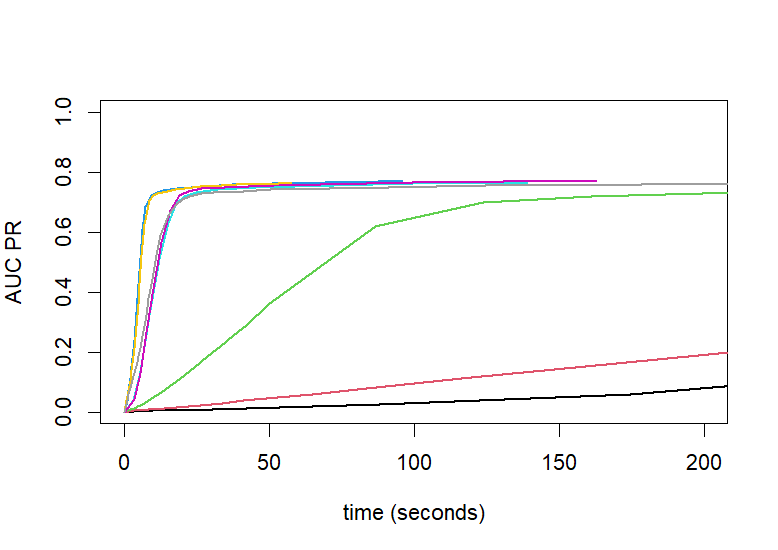}
        \end{subfigure}
    \end{tabular}\\
\caption{\textit{AUC-PR scores over running time for the BD-MPL algorithm as well as for our approach with homogeneous and inhomogeneous transition probabilities. Zoomed out on the left, zoomed in on the right. Results are the average of eight replications on instances with $p=1000$ variables, $n=400$ observations, and edge density of $0.2\%$.
}}
\label{fig:p1000_n400_sparse} 
\end{figure}

Figure \ref{fig:p1000_n400_sparse} contains the results. It leads to three main conclusions: First, our approach is remarkably fast. In the homogeneous case for $\varepsilon_s  \in \{0.3, 0.7, 0.95\}$, our approach solves $1000$-variable problems in under $30$ seconds. That is around $100$ to $200$ times faster than the state-of-the-art BD-MPL algorithm. In the inhomogeneous case, the quickly and the slowly decaying sequences are even faster. Second, for small values of $\varepsilon$, i.e., $\varepsilon \in \{0.001,0.01\}$, our approach slows down and starts behaving like the BD-MPL algorithm. This is in line with Theorem \ref{thm:homogeneous}. Third, higher values of $\varepsilon$ do not lead to worse accuracy. On the contrary, all values of $\varepsilon$ as well as the quickly and the slowly decaying sequences lead to similar AUC-PR values. Lemma \ref{lem:fact_post} might provide a reason for this surprising result. Using the G-wishart prior, the resulting posterior is not factorizable as in \eqref{eq:fact_post}. However, for large values of $p$ and for sparse instances, we hypothesize that it does behave approximately factorizable. In line with Lemma \ref{lem:fact_post}, this ensures that the limiting distribution of Algorithm \ref{alg:DTMC} is similar to that of the BD-MPL approach for all values of $\varepsilon$. This hypothesis is supported by Figures S9-S16 in Supplemenary Material C.3 and C.4. These figures show that, in denser cases, high values of $\varepsilon$ (e.g. $\varepsilon = 0.75$) do in fact lead to worse accuracy. 

This section only included one instance and one metric. In Supplementary Material C we provide results for different metrics, for a small and large number of observations, and for sparse and dense graphs. These supplementary results tell the same story: across metrics and across instances, Algorithm \ref{alg:DTMC} reduces computational efficiency without sacrificing accuracy. 

\subsection{Real-life data}
\label{sec:reallife}
The simulation study shows that our approach offers a significant reduction in computation time compared to the state-of-the-art. This section will show that this also holds for real-life data. Specifically, we will test Algorithm \ref{alg:DTMC} on a dataset with $p=623$ variables and $n=653$ observations, and show that we obtain the same quality of inference as the state-of-the-art BD-MPL algorithm \citep{Mohammadi2024}, but up to $10$ times as fast. 

The data set in question is the GSE15907 microarray dataset from \citet{Painter2011} and \citet{Desch2011}, comprising gene expression data from 24,922 genes in 653 mouse immune cells, obtained from the Immunological Genome Project \citep{ImmGen}. Studying the conditional dependency structure among genes is important as it hints at how genes in immune cells work together to produce proteins impacting the body's response to diseases. Understanding these mechanisms is essential to improving disease treatment strategies. The dataset was used in the context of Bayesian inference in GGMs before by both \citet{chandra2024} and \citet{Mohammadi2024}. 

For data preparation, we follow the steps of \citet{chandra2024}. That is, we apply a $\log_2$ transformation to avoid heteroskedasticity, and we only keep the 2.5\% of genes with the highest variance. The final dataset, therefore, contains $p = 2.5\% \times 24922 = 623$ genes. Each of the $n=653$ observations corresponds to an immune cell. Cells belonging to the same cell type have similar gene expression profiles, whereas cells belonging to different cell types behave very differently. We therefore center, for each gene, the observations within each cell type.\footnote{\citet{Mohammadi2024} skip this step. The posterior they report is therefore denser than ours} This also ensures that each variable has a zero mean. Lastly, we transform the data in line with \citep{Liu2009}. This transformation ensures that each marginal distribution can be assumed to be a Gaussian with unit variance.

We use the Bernoulli prior for the graph with a prior edge inclusion probability of $0.01$, based on the results of \citet{chandra2024}. We will run Algorithm \ref{alg:DTMC} with homogeneous transition probabilities $\varepsilon \in \{0.001,0.01,0.3,0.6,0.9\}$ and inhomogeneous transition probabilities, where we use the slowly and quickly decaying sequences \eqref{eq:sequence_decay}. We compare the performance of our algorithm with the BD-MPL algorithm by \citet{Mohammadi2024}. We initialize both algorithms at the empty graph. 

Our algorithm provides the same quality of Bayesian inference as the BD-MPL algorithm. Figure \ref{fig:scatter} provides a scatterplot comparing the estimated edge inclusion probabilities of Algorithm \ref{alg:DTMC} ($\varepsilon = 0.3$) with those of the BD-MPL algorithm. Almost all dots are close to the red diagonal line, indicating that the inference of our algorithm is very similar to that of the BD-MPL algorithm. Similar results apply for other values of $\varepsilon$, including the slowly and quickly decaying sequences. In fact, for every $\varepsilon$, the correlation coefficient between the edge inclusion probabilities of Algorithm \ref{alg:DTMC} and those of the BD-MPL algorithm is higher than $0.988$ and the average absolute difference smaller than $0.004$. An exception is $\varepsilon=0.9$, in which case the MJ-MCMC algorithm jumps uncontrollably and ends up at a different target distribution altogether.

\begin{figure}[!ht] 
    \centering
\includegraphics[width=0.5\linewidth]{ 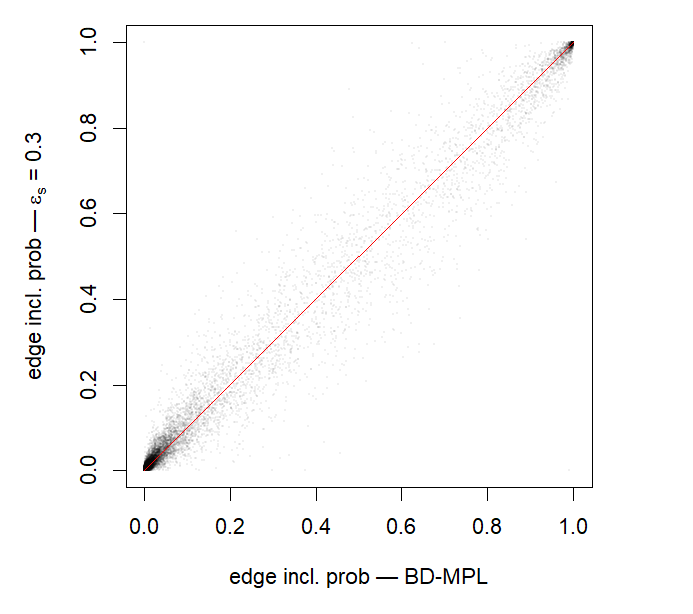}
\caption{\textit{Scatterplot comparing the edge inclusion probabilities of the BD-MPL algorithm with our MJ-MCMC algorithm with $\varepsilon=0.3$.
}}
\label{fig:scatter} 
\end{figure}

Figure \ref{fig:running_times} shows the total running time and total MCMC iterations required for each algorithm. The MJ-MCMC algorithm is up to $10$ times faster than the BD-MPL algorithm, and requires up to $600$ times fewer MCMC iterations. In line with Theorem \ref{thm:homogeneous}, we observe that as $\varepsilon \to 0$, our algorithm becomes slower and starts behaving like the BD-MPL algorithm. To create Figure \ref{fig:running_times}, we run each algorithm until a stopping criterion is met. To determine this criterion, we first run the BD-MPL algorithm for four million MCMC iterations, more than sufficient for convergence. Next, we run each algorithm until the resulting edge inclusion probabilities have a Pearson correlation of $0.98$ with those of the BD-MPL algorithm. At that point, we report the total running time and MCMC iterations. Note that, when $\varepsilon = 0.9$, this point is never reached and is therefore noted with a \textit{NA} in Figure \ref{fig:running_times}.

\begin{figure}[htbp]
\centering
\includegraphics[width=0.48\textwidth]{ 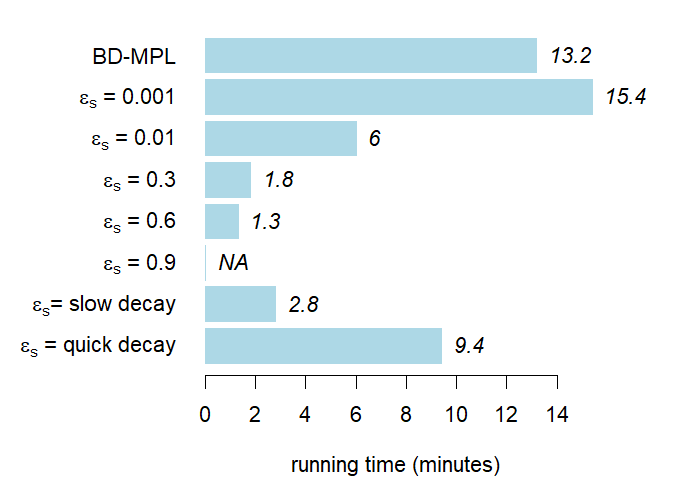}
\hfill
\includegraphics[width=0.48\textwidth]{ 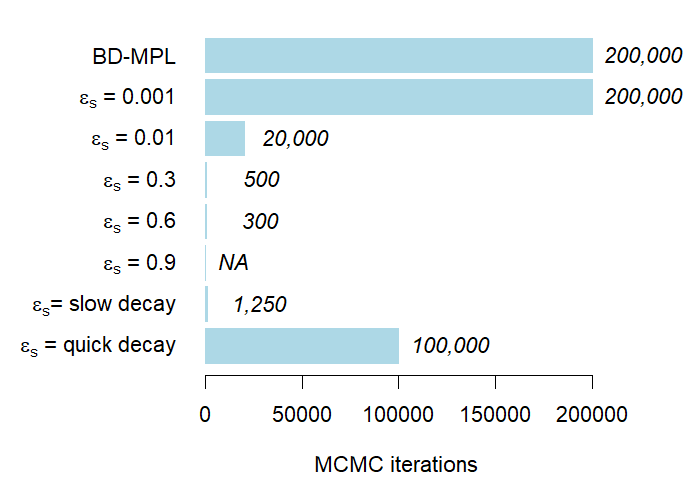}
\caption{The running time (left) and required MCMC iterations (right) of the BD-MPL algorithm compared to Algorithm \ref{alg:DTMC} for different values of $\varepsilon_s$, on the mice gene dataset with $p = 623$.}
\label{fig:running_times}
\end{figure}

\section{Conclusion}
\label{sec:conclusion}
This work introduces an algorithm for Bayesian model inference on binary model spaces. We take a BD process as a starting point and show that a simple extension creates a new DTMC. Under mild conditions, this DTMC has the same limiting distribution as the BD process. However, it can cross the entire model space in a single iteration, leading to remarkable computational efficiency. When applied to Gaussian graphical models, our algorithm is up to $200$ times as fast as the state-of-the-art. We show that our framework can be applied to Ising models and variable selection, too. There are several ways in which future work can build on this article.


First, our framework can be applied beyond the examples given in this paper. Starting with other binary models, such as mixed graphical models \citep{Florez2025}, decision trees \citep{Mohammadi2020}, mixture distributions \citep{Stephens2000} or directed graphical models \citep{Kuipers2022}. Moreover, every discrete model space can be written as a binary model space with a constraint. This opens up the possibility of applying our work to Bayesian inference on discrete model spaces. Lastly, \citet{Chen2023}[Section 4.4] show that an algorithm designed for a discrete state space can be extended to work on a continuous state space. A similar extension could apply our framework to continuous model spaces. 

Second, our framework can be extended for inference on the joint posterior. Algorithm \ref{alg:DTMC} focuses on inference on the model posterior \eqref{eq:marginal_posterior}. There are two ways to realize this extension. First, we can follow  Lemma \ref{lem:sample_ks} and sample the model-associated parameters after running Algorithm \ref{alg:DTMC}. Simulations are needed to show whether this works in practice. Alternatively, we can take a BD process on the joint space as a starting point (e.g. \citet{Mohammadi2015}), and tweak Algorithm \ref{alg:DTMC} to extend this to a superior DTMC on the joint space. A theoretical analysis and simulation study are needed to assess this strategy.

Third, further reductions in computational time are possible. One such strategy could be to update only a subset of the rates at each iteration. \citet{Chen2023}[Section 4.2] proposes this strategy in a similar set-up, and shows its potential to reduce computational costs. Alternatively, we can achieve higher efficiency by calculating the rates using only a subset of the observations, as in \citet{Welling2011}. 

Lastly, there is a lot of room for more theoretical guarantees. Although Theorem \ref{thm:homogeneous} and \ref{thm:inhomogeneous} shed light on the accuracy of the algorithm in terms of the proximity (or convergence) to the posterior distribution, we provide no results on the rate of this proximity (convergence). Moreover, \citet{Andrieu2009} show that an unbiased estimate of the likelihood is enough to design a Metropolis-Hastings algorithm that targets the exact posterior. There is potential to extend this result to a BD process and to the framework discussed in this article.




\begin{supplement}
\stitle{\textbf{Supplement A.1}}
\sdescription{Preliminary lemmas used in the proofs of Theorems \ref{thm:homogeneous} and \ref{thm:inhomogeneous}}
\end{supplement}

\begin{supplement}
\stitle{\textbf{Supplement A.2}}
\sdescription{The proof of Theorem \ref{thm:homogeneous}}
\end{supplement}

\begin{supplement}
\stitle{\textbf{Supplement A.3}}
\sdescription{The proof of Theorem \ref{thm:inhomogeneous}}
\end{supplement}

\begin{supplement}
\stitle{\textbf{Supplement A.4}}
\sdescription{The proof of Lemma \ref{lem:fact_post}}
\end{supplement}

\begin{supplement}
\stitle{\textbf{Supplement B.1}}
\sdescription{The application of MJ-MCMC to Bayesian variable selection}
\end{supplement}

\begin{supplement}
\stitle{\textbf{Supplement B.2}}
\sdescription{The application of MJ-MCMC to structure selection in Ising models}
\end{supplement}

\begin{supplement}
\stitle{\textbf{Supplement C}}
\sdescription{A detailed description of all instances, settings, metrics, and results of the simulation study of Section \ref{sec:simulation}.}
\end{supplement}

\begin{supplement}
\stitle{\textbf{Code}}
\sdescription{All code used to create the figures of this article is available on  \url{https://github.com/lucasvogels33/A-Scalable-MCMC-Algorithm-For-Bayesian-Inference-on-Binary-Model-Spaces}}
\end{supplement}

\bibliographystyle{ba}
\bibliography{sample}

\end{document}


\begin{frontmatter}
\title{Supplementary Material \\ for \\ Multiple Jump MCMC: a Scalable Algorithm for Bayesian Inference on Binary Model Spaces}
\runtitle{Scalable MCMC algorithm for Bayesian inference}

\begin{aug}

\end{aug}


\end{frontmatter}

\section{Proofs of the theoretical results in Section 3} \label{proof_of_theorem_Ttozero}

\subsection{Preliminary Lemmas} \label{sup: Preliminary lemmas}
Recall that $Q$ is the rate matrix of the BD process with transition rates $q_{i}(m)$ and limiting distribution $\pi$, with $\pi(m) > 0$ for all $m \in \mathcal{M}$. Moreover, let $P_\varepsilon$ denote the transition probability matrix corresponding to the homogeneous DTMC generated by Algorithm 1 in Section 3.1. That is,  $P_{\varepsilon}$ is the matrix with entries $P_{\varepsilon}(m,m')$ defined in Equation (8), with $\varepsilon_s = \varepsilon$. The following lemmas will be used in the proofs of Theorems 3.1 and 3.2.

\begin{lemma}\label{lem: P in terms of I and Q}
For any $\varepsilon > 0$, we have
\begin{equation} \label{eq:homo_step1}
P_{\varepsilon} = I + \varepsilon Q + E_{\varepsilon},
\end{equation}
    where $E_{\varepsilon}$ is an error matrix with entries of order $\mathcal{O}(\varepsilon^2)$.
        \end{lemma}
        \begin{proof} First, for $m \in \mathcal{M}$, we have 
    \begin{align}
    P_{\varepsilon}(m,m) & = \prod_{i=1}^{k}(1-Q(m, m^i)\varepsilon) \nonumber \\
    & = 1 - \varepsilon \sum_{i=1}^k Q(m, m^i) + \mathcal{O}(\varepsilon^2) \nonumber\\
    & = 1 + \varepsilon Q(m,m) + \mathcal{O}(\varepsilon^2), \label{eq:inhomo_one}
    \end{align}
    where we use the Taylor expansion of the product term, and in the last line, we use the fact that $Q(m, m) = -\sum_{i = 1}^{k} Q(m, m^i)$. Next, for model pairs $(m, m^{i})$ for any $i = 1, \ldots, k$, we have
\begin{align}
P_{\varepsilon}(m,m^i) & = Q(m, m^i)\varepsilon \prod_{j \not = i}^{k}(1-Q(m,m^j)\varepsilon)\nonumber\\
& = Q(m, m^i)\varepsilon \left(1 - \varepsilon \sum_{j \not = i}Q(m, m^j) + \mathcal{O}(\varepsilon^2) \right) \nonumber\\
& = \varepsilon Q(m,m^i) + \mathcal{O}(\varepsilon^2). \label{eq:inhomo_two0}
\end{align}
Lastly, for other model pairs $(m,m')$, we use the fact that $Q(m,m')$ = 0. We then have
\begin{equation}
P_{\varepsilon}(m,m') = \mathcal{O}(\varepsilon^2) = \varepsilon Q(m,m') + \mathcal{O}(\varepsilon^2).\label{eq:inhomo_three} 
\end{equation}
Combining \eqref{eq:inhomo_one}-\eqref{eq:inhomo_three}, we obtain \eqref{eq:homo_step1}.
\end{proof}

For two $l$-dimensional vectors $x,x'$ with elements $x(m)$ and $x'(m)$, $m \in \mathcal{M}$, let us define $\langle x,x'\rangle_\pi := \sum_mx(m)x'(m)\pi(m)$. Let $||x||^2_{2,\pi} = \langle x,x\rangle_\pi$. Note that this norm is well-defined since $\pi(m) > 0$ for all $m \in \mathcal{M}$, which is ensured by the irreducibility of the chain $Q$ and the finiteness of $\mathcal{M}$.

\begin{lemma}
\label{lem:inhomo}
If $Q$ is reversible with regard to $\pi$, there exist constants $\beta > 0$ and $\varepsilon^* > 0$ such that for all $\varepsilon < \varepsilon^*$ and any row vector $e \in \mathbb{R}^{1 \times l}$ with $\sum_{i = 1}^{l} e_{i}= 0$, we have
\begin{equation}|| e (I + \varepsilon Q)||_{2,\pi} \leq (1-\varepsilon \beta)||e ||_{2,\pi}.
\label{eq:inhomo_seven}
\end{equation}
\end{lemma}
\begin{proof}
This is a known result, and, in a more general form, is referred to as the Poincaré inequality. For completeness, and because a step-by-step proof is missing in the literature, we provide the proof here. First, we observe three critical facts about the matrix $\widehat{P}_{\varepsilon} := I+\varepsilon Q$:
    \begin{itemize}
    \item Let $0 = \beta_1 > \beta_2 \geq \beta_3 \geq \ldots$ be the eigenvalues of $Q$, where $\beta_1 = 0$ is ensured by the fact that $Q$ is a valid rate matrix and  $\beta_2 < 0$ is guaranteed by the fact that $Q$ is convergent. Then, for $\varepsilon > 0$, the matrix $\widehat{P}_{\varepsilon}$ has eigenvalues $\lambda_i := 1 + \varepsilon \beta_{i}$, for $i = 1, \ldots, k$, with the same set of eigenvectors, which we denote by $v_i$. In particular, the largest eigenvalues of $\widehat{P}_{\varepsilon}$ is $\lambda_1 := 1$ with corresponding eigenvector $v_1 := \mathbf{1}$. Moreover, the second largest eigenvalue of $\widehat{P}_{\varepsilon}$ is
    $\lambda_{2} := 1 + \beta_{2} \varepsilon$.
    
        \item There exists an $\varepsilon_{1}^* > 0$ such that for all $\varepsilon < \varepsilon_{1}^*$ the matrix $\widehat{P}_{\varepsilon}$ is a valid transition probability matrix that is reversible with respect to $\pi$. The reversibility is inherited from the reversibility of $Q$. Since $\widehat{P}_{\varepsilon}$ is reversible with respect to $\pi$, the eigenvectors $v_i$ form an orthonormal basis with inner product $\langle \cdot, \cdot\rangle_\pi$. That is, $\langle v_i,v_i\rangle_\pi = 1$, $\langle v_i,v_j\rangle_\pi = 0$ for all $i \not = j$, and any vector $f$ can be written as $f = \sum_{i=1}^lc_iv_i$ with $c_i := \langle f,v_i \rangle_\pi \in \mathbb{R}$.
    \item  Furthermore, pick some $\varepsilon_{2}^{\ast} < -1/\beta_{2}$ and let $\varepsilon^{\ast} = \min\{ \varepsilon_{1}^*, \varepsilon_{2}^*\}$. Then, for any $\varepsilon < \varepsilon^{\ast}$, $\widehat{P}_{\varepsilon}$ is a valid transition probability matrix whose second eigenvalue $\lambda_{2} = 1 + \varepsilon \beta_{2} \in (0, 1)$.
\end{itemize}
Take any $\varepsilon < \varepsilon^{\ast}$ and let $f = \sum_{i=1}^lc_i v_i$ be such that $c_1 := \langle f,\mathbf{{1}} \rangle_\pi = 0$. We have
\begin{equation*}
f \widehat{P}_{\varepsilon} = \sum_{i = 1}^lc_i  v_i \widehat{P}_{\varepsilon} = \sum_{i = 1}^lc_i \lambda_i v_i  = \sum_{i = 2}^lc_i \lambda_i v_i.
\end{equation*}
Therefore, we have
\begin{align*}
    ||f \widehat{P}_{\varepsilon} ||^2_{2,\pi} & = \langle f \widehat{P}_{\varepsilon},f \widehat{P}_{\varepsilon} \rangle_\pi \\
    & = 
    \left\langle \sum_{i = 2}^lc_i \lambda_i v_i ,\sum_{j = 2}^lc_j \lambda_j v_j  \right\rangle_\pi\\
    & = \sum_{i = 2}^l \sum_{j = 2}^lc_i \lambda_i c_j \lambda_j \langle v_i,v_j  \rangle_\pi\\
    & = \sum_{i = 2}^lc_i^2\lambda_i^2\\
    & \leq \lambda_2^2\sum_{i = 2}^lc_i^2  = \lambda_2^2||f||^2_{2,\pi}.
\end{align*}
Since $\lambda_{2} \in (0, 1)$, taking square roots gives 
\begin{equation}
||f \widehat{P}_{\varepsilon}||_{2,\pi} \leq \lambda_2||f||_{2,\pi} = (1+ \beta_{2}\varepsilon)||f||_{2,\pi}.
\label{eq:inhomo_eight}
\end{equation}
Let $\pi^{-1}$ denote the vector with entries $\frac{1}{\pi(m)}$. For two vectors $x$ and $x'$ with entries $x(m)$ and $x(m')$, let $x \circ x'$ denote the Hadamard product, with entries $x(m)x(m')$. Note that we have $\langle \pi^{-1} \circ e,\mathbf{1}\rangle_{\pi} = 0$ and that, therefore, \eqref{eq:inhomo_eight} holds for $f =\pi^{-1} \circ e$. Let $D_x$ denote the diagonal matrix with diagonal entries $x(m)$, and let $||\cdot||_{op}$ denote the matrix operator norm corresponding to the $||\cdot||_{2,\pi}$ vector norm. For a matrix $A$, we have $||(x \circ x')A||_{2,\pi} \leq ||D_x||_{op}||x'A||_{2,\pi}$. We have
\begin{align*}
||e \widehat{P}_{\varepsilon} ||_{2,\pi} & = ||(\pi \circ \pi^{-1} \circ e) \widehat{P}_{\varepsilon} ||_{2,\pi}\\
& \leq || \pi^{-1} \circ e) \widehat{P}_{\varepsilon} ||_{2, \pi} ||D_\pi||_{op}\\
& \leq (1+\beta_{2} \varepsilon)
||\pi^{-1} \circ e ||_{2,\pi} 
||D_\pi||_{op} \\
& \leq (1+\beta_{2}\varepsilon)
|| e ||_{2,\pi}
||D_{\pi^{-1}}||_{op}
||D_\pi||_{op} \\
& \leq (1+\beta_{2}\varepsilon)
|| e ||_{2,\pi}
||D_{\pi^{-1}}D_\pi||_{op} \\
& = (1+\beta_{2}\varepsilon)
|| e ||_{2,\pi}.
\end{align*}
 Finally, let $\beta = -\beta_{2} > 0$. This concludes the proof.
\end{proof}

%

\subsection{Proof of Theorem 3.1} \label{sup: Proof of Theorem homo}

We prove the claims (i)-(iv) in order.

\begin{enumerate}[label = (\roman*)]
\item Since $\pi(m) > 0$ for all $m \in \mathcal{M}$, the BD process is irreducible. Adding the fact that $\mathcal{M}$ is finite-space, we conclude that $\{ M_{\varepsilon}^{(s)} \}_{s \geq 1}$ is irreducible and aperiodic. Hence, the stationary distribution $\pi_{\varepsilon}$ is unique and equal to the limiting distribution.

\item 


By Lemma \ref{lem: P in terms of I and Q}, we have
\begin{equation} \label{eq:inhomo_step0}
P_{\varepsilon} = I + \varepsilon Q + E_\varepsilon,
\end{equation}
where $E_{\varepsilon} = \mathcal{O}(\varepsilon^{2})$. Next, let $e_{\varepsilon} := \pi_{\varepsilon} -  \pi$ denote the error vector. Using $\pi Q = 0$, and $\pi_{\varepsilon} = \pi_{\varepsilon}  P_{\varepsilon}$, and Equation \eqref{eq:homo_step1}, we have
    \begin{align}
        e_{\varepsilon} & = \pi_{\varepsilon}-\pi \nonumber\\
        & = \pi_{\varepsilon} P_{\varepsilon} - \pi \nonumber\\
        & = (\pi + e_\varepsilon)(I  + \varepsilon Q + E_{\varepsilon}) - \pi \nonumber\\
        & = \pi + \varepsilon \pi Q + \pi E_{\varepsilon} + e_{\varepsilon} + \varepsilon e_{\varepsilon} Q + e_{\varepsilon} E_{\varepsilon} - \pi \nonumber\\
        & = e_{\varepsilon}(I + \varepsilon Q) + (\pi+e_{\varepsilon}) E_{\varepsilon}.
        \label{eq:home_four}
    \end{align}
    Applying the triangular inequality on \eqref{eq:home_four}, we have 
\begin{equation}
||e_{\varepsilon}||_{2,\pi} \leq ||e_\varepsilon(I + \varepsilon Q)||_{2,\pi} + ||(\pi+e_\varepsilon) E_\varepsilon ||_{2,\pi}. \label{eq:homo_five}
\end{equation}

Since the elements of $e_\varepsilon$ sum to $0$, by Lemma \ref{lem:inhomo}, there exists a $\beta > 0$ and an $\varepsilon^{\ast} > 0$ such that for all $\varepsilon < \varepsilon^*$, we have
\begin{equation}||e_{\varepsilon}(I + \varepsilon Q)||_{2,\pi} \leq (1-\varepsilon \beta)||e_{\varepsilon}||_{2,\pi}.
\label{eq:homo_seven}
\end{equation}
For the second term in \eqref{eq:homo_five}, since $E_s$ is a matrix with entries of order $\mathcal{O}(\varepsilon^2)$, we have for some constant $C>0$ that 
\begin{equation}
    ||(\pi + e_\varepsilon)E_s||_{2,\pi} \leq  C\varepsilon^2.\label{eq:homo_six}
\end{equation}
Substituting \eqref{eq:homo_seven} and \eqref{eq:homo_six} into \eqref{eq:homo_five}, we have
\begin{equation} \label{eq:homo_bound}
    ||e_{\varepsilon}||_{2,\pi} \leq  (1 - \varepsilon \beta)||e_{\varepsilon}||_{2,\pi} + C\varepsilon^2,
\end{equation}
for some $\beta > 0, C > 0$ and all $\varepsilon < \varepsilon^{\ast}$. Rewriting \eqref{eq:homo_bound}, we get
\begin{equation} \label{eq:homo_bound_rewritten}
    ||e_{\varepsilon}||_{2,\pi} \leq  \frac{C\varepsilon}{\beta}.
\end{equation}
Since any two vector norms on a finite space are equivalent, there exists, for any vector norm $|| \cdot ||$, a constant $A > 0$, such that
\begin{equation}
||e_\varepsilon|| \leq A ||e_{\varepsilon}||_{2,\pi} \leq \frac{AC}{\beta}\varepsilon.
\label{eq:bound}
\end{equation}
Hence, $||e_\varepsilon|| = O(\varepsilon)$.



%

\item For the CTMC $(X(t))_{t \geq 0}$, the waiting time $W_{m}$ in state $m$ follows an exponential distribution with rate $\lambda(m) := \sum_{i=1}^{k} q_i(m)$. It therefore has the following cumulative distribution function:
\begin{equation*}
    \Pr(W(m) \leq u) = 1 - e^{-\lambda(m) u}.
\end{equation*}
For the DTMC in Algorithm 1 with fixed $\varepsilon_{s} = \varepsilon$, first note that the rejection (self-transition) probability in state $m$ is
\begin{equation*}
    r_{\varepsilon}(m) := P_{\varepsilon}(m,  m) = \prod_{i=1}^{k}(1-q_i(m)\varepsilon).
\end{equation*}
Hence, for the waiting times, we have
\[
\Pr(\varepsilon W_{m}^{\varepsilon} \leq u) = \Pr(W_{m}^{\varepsilon} \leq u/\varepsilon) = 1 - r_{\varepsilon}(m)^{\lfloor u/\varepsilon \rfloor}, \quad u \geq 0.
\]
In the limit, this cumulative distribution function (cdf) converges to the cdf of the CTMC. That is,
\begin{equation}
\label{eq:cdf_is_cdf}
    \lim_{\varepsilon \to 0} \Pr(\varepsilon W_{m}^{\varepsilon} \leq u) = 1 - e^{-\lambda(m) u}.
\end{equation}
To see that \eqref{eq:cdf_is_cdf} holds, we write
\begin{align*}
 \lim_{\varepsilon \to 0} \left\lfloor \frac{u}{\varepsilon} \right\rfloor r_{\varepsilon}(m) &= \lim_{\varepsilon \to 0} \frac{u}{\varepsilon} \ln r_{\varepsilon}(m) + \lim_{\varepsilon \to 0} \mathcal{O}(\varepsilon) \ln r_{\varepsilon}(m) \\
 &= \lim_{\varepsilon \to 0} \frac{ u \ln r_{\varepsilon}(m) }{\varepsilon} + 0 \\
 & = \frac{\lim_{\varepsilon \to 0}  u \sum_{i = 1}^{k} \frac{-q_{i}(m)}{1 - q_{i}(m) \varepsilon}}{1} \\
 & = - u \sum_{i = 1}^{k} q_{i}(m) = -u \lambda(m),
\end{align*}
where the first line is due to $u/\varepsilon - \lfloor u/\varepsilon \rfloor = \mathcal{O}(\varepsilon)$, the second line is because the second term converges to zero, and the third line is due to the L'Hopital rule. Hence, 
\begin{align*}
\lim_{\varepsilon \to 0} \mathbb{P}(\varepsilon W_{m}^{\varepsilon} \leq u) & = 1 - \lim_{\varepsilon \to 0} r_{\varepsilon}(m)^{\lfloor u/\varepsilon \rfloor} \\
& = 1 - \lim_{\varepsilon \to 0} \exp\left\{  \left\lfloor \frac{u}{\varepsilon} \right\rfloor \ln r_{\varepsilon}(m) \right\} \\
& = 1 - \exp\left\{ \lim_{\varepsilon \to 0} \left\lfloor \frac{u}{\varepsilon} \right\rfloor \ln r_{\varepsilon}(m) \right\} \\
& = 1 - e^{- u \lambda(m)}.
\end{align*}
Therefore, as $\varepsilon \rightarrow 0$, the (scaled) waiting time $\varepsilon W_{m}^{\varepsilon}$ converges in distribution to an exponential random variable with rate $\lambda(m)$, which is exactly the distribution of the waiting time of the birth-death CTMC. 

\item For the CTMC, the chain moves to a new state $m'$ with a probability
\begin{equation*}
\widetilde{P}(m, m') = \begin{cases}
       \frac{q_i(m)}{\lambda(m)}, & \text{if } m' = m^i \text{ for some } i \in \{1,\ldots,k\};\\
       0,  & \text{otherwise}.
   \end{cases}
\end{equation*}
Assume now that our DTMC is in state $m$ after $s$ iterations. That is, $m^{(s)} = m$. Given that our chain moves to new state $m^{(s+1)} \neq m$, the probability that this new state is $m'$ is 
\begin{align*}
\widetilde{P}_{\varepsilon}(m, m') & := \Pr[m^{(s+1)} = m'  \mid  m^{(s)} = m, m^{(s+1)} \not = m^{(s)}] \\
& = \frac{\Pr(m^{(s+1)} = m'  \mid  m^{(s)}=m)}{\Pr(m^{(s+1)} \neq m  \mid  m^{(s)}=m]} \\
& = \frac{P_{\varepsilon}(m, m')}{1-r_{\varepsilon}(m)}.
\end{align*}
We  consider two cases: $m' \in \{m^1,\ldots,m^k\}$ and $m' \not \in \{m^1,\ldots,m^k\}$. For the first case, where $m = m^{i}$ for some $i \in \{1, \ldots, k\}$, we have
\begin{align*}
\widetilde{P}_{\varepsilon}(m, m^{i}) &= \frac{\varepsilon q_{i}(m) \prod_{j \neq i} (1 - \varepsilon q_{j}(m))}{1 -\prod_{j = 1}^{k} (1 - \varepsilon q_{j}(m))} \\
&= \frac{\varepsilon q_{i}(m) }{1 -\prod_{j = 1}^{k} (1 - \varepsilon q_{j}(m))} \prod_{j \neq i} (1 - \varepsilon q_{j}(m)).
\end{align*}
Note that $\lim_{\varepsilon \to 0} \prod_{j \not= i} (1 - \varepsilon q_{j}(m)) = 1$. Moreover, by the L'Hopital rule, the other factor converges to
\begin{align*}
\lim_{\varepsilon \to 0} \frac{\varepsilon q_{i}(m) }{1 -\prod_{j = 1}^{k} (1 - \varepsilon q_{j}(m))} = \frac{q_{i}(m)}{\lim_{\varepsilon \to 0}\sum_{j = 1}^{k} q_{j}(m) \prod_{l \neq j} (1 - \varepsilon q_{l}(m))} = \frac{q_{i}(m)}{\sum_{j = 1}^{k} q_{j}(m)}.
\end{align*}
Hence, 
\[
 \lim_{\varepsilon \to 0} \widetilde{P}_{\varepsilon}(m, m^i) = \lim_{\varepsilon \to 0} \frac{P_{\varepsilon}(m,m^{i})}{1 -r_{\varepsilon}(m)} = \frac{q_{i}(m)}{\sum_{j = 1}^{k} q_{j}(m)} = \frac{q_{i}(m)}{\lambda(m)} = \widetilde{P}(m, m^i).
\]
The proof for the second case, where $m' \not \in \{m^1,\ldots,m^k\}$, follows naturally. Note that the probability that the chain moves from a state $m$ to a neighboring state converges as 
\[ 
\lim_{\varepsilon \to 0}
\sum_{i=1}^{k} \widetilde{P}_{\varepsilon}(m, m^i) = \sum_{i=1}^{k} \frac{q_i(m)}{\lambda(m)} = 1.
\]
Hence, when $m' \not\in \{m^1,\ldots,m^k\}$, $m'$ is not a neighbor of $m$ and $\widetilde{P}_{\varepsilon}(m, m') = 0 = \widetilde{P}(m, m')$.
\end{enumerate}

\subsection{Proof of Theorem 3.2} \label{sup: Proof of Theorem inhomo} 
As before, let $l := 2^k$. For $s=1,2,\ldots$, let $P_s$ denote the transition probability matrix at iterations $s$ of our DTMC. That is, $P_s$ is the $l \times l$ matrix with elements $P_{\varepsilon_s}(m,m')$ given by Equation 8 of the main document. Let $\mu_0$ be the initial distribution of our DTMC and let $\mu_s$ denote the probability distribution at iteration $s$. That is,
\begin{equation}
\mu_s := \mu_0 P_1 P_2 \ldots P_s \quad s \geq 0.
\end{equation}

Our reversible BD process has invariant distribution $\pi$ and an $l \times l$ rate matrix $Q$. This proof will show that the inhomogeneous DTMC has the same limiting distribution as the BD process, as long as $\lim_{s \to \infty} \varepsilon_s = 0$ and $\sum_{s=1}^\infty\varepsilon_s = \infty$. That is, we will show that 
\begin{equation}
\lim_{s \to \infty}||\mu_s-\pi|| = 0,
\end{equation}
where $||\cdot||$ can be any vector norm. Note that we denote both $\mu_s$ and $\pi$ as $l$-dimensional vectors with elements $\mu_s(m)$ and $\pi(m)$ respectively. Lastly, define the $l_2(\pi)$-norm of a vector $x = (x(1),\ldots,x(l))$ as $\| x \|_{2,\pi} = \left( \sum_{i} x(i)^2\pi(i) \right)^{1/2}$.  Note that this norm is well-defined since $\pi(m) > 0$ for all $m \in \mathcal{M}$, which is ensured by the irreducibility of the chain $Q$ and the finiteness of $\mathcal{M}$. 

\noindent We will now prove Theorem 3.2

\begin{enumerate}
    \item \textbf{Writing the DTMC as a function of the BD process:} First, by Lemma \ref{lem: P in terms of I and Q}, we can write
    \begin{equation} \label{eq:inhomo_step1}
    P_{s} = I + \varepsilon_s Q + E_s,
    \end{equation}
    where $E_s$ is an error matrix with entries of order $\mathcal{O}(\varepsilon_s^2)$. 
%
    \item \textbf{Evolution of the error term:} Let $e_s := \mu_s -  \pi$ denote the error vector at iteration $s$. Using $\pi Q = 0$, $\mu_{s+1} = \mu_s P_{s}$, and Equation \eqref{eq:inhomo_step1}, we have
    \begin{align}
        e_{s+1} & = \mu_{s+1}-\pi \nonumber\\
        & = \mu_sP_{s} - \pi \nonumber\\
        & = (\pi + e_s)(I  + \varepsilon_sQ + E_s) - \pi \nonumber\\
        & = \pi + \varepsilon_s\pi Q + \pi E_s + e_s + \varepsilon_s e_s Q + e_s E_s - \pi \nonumber\\
        & = e_s(I + \varepsilon_sQ) + (\pi+e_s) E_s.
        \label{eq:inhome_four}
    \end{align}
\item \textbf{Bounding of the error term:} Applying the triangular inequality on \eqref{eq:inhome_four},  we have 
\begin{equation}
||e_{s+1}||_{2,\pi} \leq ||e_s(I + \varepsilon_s Q)||_{2,\pi} + ||(\pi+e_s) E_s ||_{2,\pi}. \label{eq:inhomo_five}
\end{equation}
Since $e_s$ sums to $0$, by Lemma \ref{lem:inhomo}, there exists an $s^* >0$, such that for all $s \geq s^*$, there exists a $\beta > 0$ such that
\begin{equation}
    || e_s (I + \varepsilon Q)||_{2,\pi} \leq (1-\varepsilon \beta)||e_s ||_{2,\pi}.
\end{equation}
The existence of $s^{\ast}$ is guaranteed by the condition that $\varepsilon_{s} \to 0$. For the second term in \eqref{eq:inhomo_five}, since $E_s$ is a matrix with entries of order $\mathcal{O}(\varepsilon_s^2)$, we have that 
\begin{equation}
    ||(\pi + e_s)E_s||_{2,\pi} \leq  C\varepsilon_s^2,\label{eq:inhomo_six}
\end{equation}
for some constant $C>0$. Hence, there exists an $s^* >0$ and $\beta > 0$, such that for all $s \geq s^*$, we have
\begin{equation} 
    ||e_{s+1}||_{2,\pi} \leq  (1 - \varepsilon_s \beta)||e_{s}||_{2,\pi} + C\varepsilon_s^2
\end{equation}
for some $\beta, C > 0$ and all $s > s^{\ast}$.

\item \textbf{Unwinding recurrence:} We now analyze the recurrence relation 
\begin{equation} \label{eq:step4_1}
r_{s+1} \leq (1 - \beta \varepsilon_s) r_s + C \varepsilon_s^2,
\end{equation} where $r_s := ||e_s||_{2,\pi}$. We will show that this recurrence relation, together with the decay and divergence in the sequence $\{ \varepsilon_s\}_{s \geq 1}$, implies that $\lim_{s\to \infty} r_s = 0$.

    Let $\delta \in (0, r_{0})$ be an arbitrary number. Since $\varepsilon_s \to 0$, we know that there exists an $s' > s^{\ast}$, such that for all $s \geq s'$, we have $\varepsilon_s < \frac{\delta \beta}{C}$, and therefore
    \begin{equation}
    C\varepsilon_s^2 < \delta \beta \varepsilon_s \quad \text{ for all } s \geq s'.
    \label{eq:step4_2}
    \end{equation}
    We can combine Equation \eqref{eq:step4_1} and \eqref{eq:step4_2} to obtain
    \begin{equation}
        r_{s+1} < (1 - \beta \varepsilon_s) r_s + \delta \beta \varepsilon_s \quad \text{ for all } s \geq s'.
        \label{eq:step4_3}
    \end{equation}
    Now, imagine another sequence $\{z_s\}_{s \geq s'}$ that has $z_{s'} = r_{s'}$, and 
    \begin{equation}
        z_{s+1} = (1 - \beta \varepsilon_s) z_s + \delta \beta \varepsilon_s \quad \text{ for all } s \geq s'.
    \end{equation}
    We therefore have $r_s < z_s$ for all $s > s'$. Moreover, note that we can write 
    \begin{equation}
        z_{s+1} - \delta = (1 - \beta \varepsilon_s)(z_s - \delta)\text{ for all } s \geq s'.
    \end{equation}
    Therefore, we can use the recursive product to write
    \begin{equation}
        z_{s} - \delta = (z_{s'} - \delta) \prod_{i=s'}^{s-1}(1 - \beta \varepsilon_s)\text{ for all } s \geq s'.
    \end{equation}
    Since $\beta\varepsilon_{s} \in (0, 1)$, We then obtain a bound on $z_s$ by
    \begin{align}
    z_s & = \delta + (z_{s'} - \delta)\prod_{i=s'}^{s-1}(1 - \beta \varepsilon_s)\\
    & \leq \delta + (z_{s'} - \delta)\prod_{i=s'}^{s-1}e^{-\beta \varepsilon_s}\\
    & = \delta + (z_{s'} - \delta)\exp\left(-\beta \sum_{i=s'}^{s-1}\varepsilon_s\right).
    \end{align}
    Since $\sum_{i=s'}^{\infty}\varepsilon_s = \infty$, we have 
    \begin{equation}
    \lim_{s\to \infty}||\mu_s - \pi||_{2,\pi} = \lim_{s\to \infty}||e_s||_{2,\pi} = 
     \lim_{s \to \infty}  r_s < \lim_{s \to \infty} z_s \leq \delta,
    \end{equation}
    but since $\delta >0$ can be arbitrarily small, we obtain $\lim_{s\to \infty}||\mu_s - \pi||_{2,\pi} = 0$.   
    \item \textbf{Concluding}: Since any two vector norms on a finite vector space are equivalent, we have that $\lim_{s\to \infty}||\mu_s - \pi|| =  0$ for any vector norm $||\cdot||$.

\end{enumerate}

\subsection{Proof of Lemma 3.3} \label{sup:proof_fact_bd_process}
We assume that the posterior is factorizable as in Equation 10. Consider the BD process with rates 
\begin{equation*}
q_i(m) = \min\left\{1,\frac{p(m^{i} \mid y)}{p(m \mid y)}\right\}
\end{equation*}
and invariant distribution $\pi(m) = p(m \mid y)$. Consider also the homogeneous DTMC with transition probabilities
\begin{equation*} 
    P_{\varepsilon}(m, m') = \prod_{i \in H_{m, m'}}q_i(m)\varepsilon \prod_{i \not \in H_{m, m'}}(1-q_i(m)\varepsilon),
\end{equation*}
for some $\varepsilon \in (0,1)$. Here $H_{m, m'} =\{i_1,\ldots, i_r\} \subseteq \{1,\ldots,k\}$ is the set of elements in which models $m$ and $m'$ differ. We will show that, for any $\varepsilon \in (0,1)$, the DTMC will also have $\pi(m)$ as its invariant distribution.
\begin{enumerate}
    \item \textbf{Rate equivalence:} In this step we show that 
    \begin{equation}
        q_i(m) = q_i(m') \quad \text{ for all } m,m' \quad \text {with } m_i = m'_i.
        \label{eq:lemma_3.3_one}
    \end{equation} 
    We have
    \begin{align*}
        \frac{p(m^i \mid y)}{p(m \mid y)} & = \frac{\prod_{i=1}^kp(m^i_i \mid y)}{\prod_{i=1}^kp(m_i \mid y)} \\
        & = \frac{p(m^i_i \mid y)}{p(m_i \mid y)} \\
        & = \frac{p(1-m_i \mid y)}{p(m_i \mid y)}.
    \end{align*}
    Hence, the rate $q_i(m)$ only depends on $m_i$ and equation \eqref{eq:lemma_3.3_one} holds.

    \item \textbf{Detailed balance:} Let $m,m'$ be arbitrary models. For all $i \not \in H$, we have $m_i = m'_i$, and hence $q_i(m) = q_i(m')$, and therefore
    \begin{equation}
        \prod_{i \not \in H}(1-q_i(m)\varepsilon) = \prod_{i \not \in H}(1-q_i(m')\varepsilon).
        \label{eq:lemma_3.3_two}
    \end{equation}
    For all $i \in H$, we have $m_i = m'^i_i$, and hence $q_i(m) = q_i(m'^i)$. Therefore, we have 
    \begin{equation}
    \label{eq:lemma_3.3_three}
    \pi(m) \prod_{i \in H}q_i(m) \varepsilon = \pi(m') \prod_{i \in H}q_i(m')\varepsilon,
    \end{equation}
    because
\begin{align*}
    \pi(m) \prod_{i \in H}q_i(m)\varepsilon & = \varepsilon^r\pi(m) q_{i_1}(m)q_{i_2}(m)\ldots q_{i_r}(m) \\
    & = \varepsilon^r\pi(m) q_{i_1}(m)q_{i_2}(m^{i_1})\ldots q_{i_r}(m^{i_1i_2\ldots i_{r-1}}) \\
    & = \varepsilon^r q_{i_1}(m^{i_1})q_{i_2}(m^{i_1i_2})\ldots q_{i_r}(m^{i_1i_2\ldots i_r})\pi(m^{i_1i_2\ldots i_r})\\
    & = \varepsilon^r q_{i_1}(m')q_{i_2}(m')\ldots q_{i_r}(m')\pi(m')\\
    & = \pi(m') \prod_{i \in H}q_i(m')\varepsilon
\end{align*}
Here, in the 3rd equality, we use that $\pi(m)q_i(m) = \pi(m^i)q_i(m^i)$ for all $i=1,\cdots,k$ and all $m \in \mathcal{M}$. 
Now, combining \eqref{eq:lemma_3.3_two} and \eqref{eq:lemma_3.3_three} gives the required detailed balance:
\begin{equation*}
    \pi(m) P_{\varepsilon}(m, m') = \pi(m') P_{\varepsilon}(m', m).
\end{equation*}
We conclude that $\pi(m)$ is also the invariant distribution for the DTMC.
\end{enumerate}

\section{Applications}
In Section 4.1 of the main article, we apply our work to undirected Gaussian graphical models. Here, we also apply our approach to Bayesian variable selection and Ising models. For each application, the path to construct our Multiple Jump algorithm is the same: we formulate the model posterior and specify the rates $q_i(m) := Q(m,m^i)$ for the BD process in Lemma 2.1 of the main article. Using these rates in Algorithm 1 will give the desired algorithm.

\subsection{Bayesian variable selection}
\label{sec:app_BVS}
This subsection applies our work to Bayesian variable selection. Variable selection -- also referred to as feature selection -- aims to find a subset of variables that best explains an outcome of interest. We treat here the most fundamental form of variable selection: the multiple linear regression. Literature has suggested many approaches for this problem. Here, we will treat the approach using the g-prior \citep{Fernandez2004}. We propose a BD process that can be extended to a superior DTMC.\\

\paragraph{Preliminaries:} We assume that a continuous variable $Y$ is modeled via a linear combination of $k$ explanatory variables $X_1,\ldots,X_k$. We observe $n$ observations of all our variables. Let $y = (y_1,\ldots,y_n)$ denote the $n$ observations of variable $Y$, and let the $n \times k$ matrix $x$ contain the corresponding $n$ observations from the variables $X_1,\ldots,X_k$. Its elements $x_{ij}$ denote the $i$th observation of variable $X_j$, $i=1,\ldots,n$ and $j=1,\ldots,k$. We then have
\begin{equation}
    y_i = x_{i1}\beta_1 + \cdots + x_{ik}\beta_k + \varepsilon_i, \quad i=1,\ldots,n,
\end{equation}
where $\varepsilon_i \sim \mathcal{N}(0,1)$ and $\beta = (\beta_1,\ldots,\beta_k)$ is the vector of unknown regression coefficients. We define a ``model'' as a binary vector $\gamma = (\gamma_1,\ldots,\gamma_k )$, where $\gamma_i$ is equal to $1$ if variable $X_i$ is deemed relevant and $0$ otherwise. The regression coefficients $\beta_{k}$ are the ``model-associated parameters''. Bayesian variable selection sets out to find the posterior distribution of this model, given by 
\begin{equation}
p(\gamma  \mid  y,x) = \frac{p(y,x  \mid  \gamma ) p(\gamma)}{p(y,x)}.
\label{eq:posterior_BVS}
\end{equation}
For a review of Bayesian variable selection, see \citet{Tadesse2021}.

\paragraph{BD processes:} Using the g-prior \citep{Fernandez2004} for the regression coefficients, the Bayes factor simplifies to 
\begin{equation}
\frac{p(y,x  \mid  \gamma^i)}{p(y,x \mid  \gamma)} = (1+g)^{(|\gamma|-|\gamma^i|)1/2}\frac{(1+g(1-R^2_{\gamma}))^{\frac{n-1}{2}}  }{(1+g(1-R^2_{\gamma^i}))^{\frac{n-1}{2}}},
\label{eq:BVS_1}
\end{equation}
where $\gamma^i$ equals the vector $\gamma$ except in element $\gamma_i$, $g \in \mathbb{R}^+$ is a user-specified constant, $R^2_{\gamma}$ is the coefficient of determination for model $\gamma$, and $|\gamma|$ denotes the number of non-zero entries in a model $\gamma$. Since $\gamma^i$ and $\gamma$ differ in only one element, we have $(|\gamma|-|\gamma^i|) \in \{-1,1\}$. With this exact evaluation of the Bayes factor, one can design a BD algorithm as in Lemma 2.1 in the main article with rates
\begin{equation}
Q(\gamma, \gamma') = \begin{cases} q_i(\gamma), & \text{if }  \gamma' = \gamma^i \text{ for some $i \in \{1,\ldots,k\}$} ; \\
0, & \text{otherwise},
\end{cases}
\end{equation}
where
\begin{equation}
q_i(\gamma) := \min \left\{1, \frac{p(y,x  \mid  \gamma^i)}{p(y,x \mid  \gamma)}\frac{p(\gamma^i)}{p(\gamma)} \right\}.
\label{eq:rates_gprior}
\end{equation}

This BD process will add/delete exactly one variable in each iteration. This becomes problematic for large model spaces. 

\paragraph{Our algorithm:} We can extend the BD process to a DTMC with transition probabilities (8) and use Algorithm 1 to generate samples $\gamma^{(1)},\ldots,\gamma^{(S)}$. These samples will have the same limiting distribution as the corresponding BD process as long as the sequence $\{ \varepsilon_{s}\}_{s \geq 1}$ satisfies (9), see Theorem 3.2. However, the DTMC is able to add/delete multiple edges in a single iteration, and therefore has the potential to lead to a significant reduction in computation time. 

\subsection{Ising model}
\label{sec:app_Ising}
This subsection treats the Ising model. We introduce the Ising model and propose a BD process, which we extend to a superior DTMC.

\paragraph{Preliminaries:}
In the Ising model, the data $y$ is an $n \times p$ matrix, containing $n$ observations of the variables $Y_1,\ldots,Y_p$. Each observation $y = (y_1,\ldots,y_p)$ is a $p$-dimensional binary vector and follows the following distribution:
\begin{equation*}
p(y  \mid  \mu, \Sigma)
= \frac{1}{Z(\mu, \Sigma)} 
\exp\!\left(
\sum_{e=(i,j): i <j} \sigma_e\, y_i y_j
+ \sum_{i} \mu_i\, y_i
\right), \quad i,j = 1,\ldots,p.
\end{equation*}
Here, the vector $\mu = (\mu_1,\ldots,\mu_p)$ contains the main effects, the symmetric matrix $\Sigma$ -- with elements $\sigma_{e}$ for every edge $e = (i,j)$ -- contains the associations between each pair of variables, and $Z(\mu, \Sigma)$ denotes the normalizing constant. The matrix $\Sigma$ is called the interaction matrix and encodes the conditional independence between the variables due to the following relationship:
\begin{equation*}
    \sigma_{e} = 0 \text{ for some edge } e = (i,j)\iff Y_i \text{ and } Y_j \text{ are conditionally independent}. 
\end{equation*}
Similar to GGMs (Subsection 4.1), we can depict the conditional dependence relationships in a graph $G$. Bayesian inference in Ising models aims to uncover the posterior distribution of this graph given by
\begin{equation}
p(G \mid y) = \frac{p(y \mid G)p(G)}{p(y)}.
\end{equation}
As in GGMs, we can encode every graph $G$ as a binary vector of length $k=p(p-1)/2$. The graphs $G$ are therefore the binary ``models,'' and the space of all $p \times p$ graphs is the binary model space. The interaction matrix $\Sigma$ is the model-associated parameter. 

\paragraph{BD process:} 
Using the marginal pseudo-likelihood \citep{besag1975}, we can factorize the marginal likelihood $p(y \mid G)$ nodewise and obtain
\begin{equation}
    p(y \mid G) \approx \tilde{p}(y \mid G) := \prod_{i=1}^pp_i(G),
\end{equation}
where $p_i(G)$ is the marginal likelihood of a logistic regression of variable $Y_i$ on all variables $Y_j$ that are neighbors of node $i$ in the graph $G$. Following \cite{Barber2015}, we can approximate each $p_i(G)$ using the (extended) Bayesian Information Criterion. Let $G$ and $G^e$ be two graphs that differ only in the edge $e = (i,j)$. Their Bayes factor can now be approximated by 
\begin{equation}
\frac{p(y \mid G^e)}{p(y \mid G)} \approx \frac{\tilde{p}(y \mid G^e)}{\tilde{p}(y \mid G)} = \frac{p_i(G^e)p_j(G^e)}{p_i(G)p_j(G)}.
\end{equation}
We can now design a BD process with rates 
\begin{equation}
Q(G, G') = \begin{cases} q_e(G), & \text{if } G' = G^e \text{ for some edge $e$};\\
0, & \text{otherwise,}
\end{cases}
\end{equation}
where 
\begin{equation}
q_e(G) := \min \left\{1,\frac{p_i(G^e)p_j(G^e)}{p_i(G)p_j(G)} \frac{p(G^e)}{p(G)} \right\}.
\label{eq:rates_Ising}
\end{equation}
Such a BD process would add/delete a single edge per iteration. It would therefore be problematic to apply it to large-scale problems.

\paragraph{Our algorithm:} Our work allows us to extend this BD process to a DTMC with transition probabilities (8), where the rates $q_e(G)$ are as in \eqref{eq:rates_Ising}. One can use Algorithm 1 to sample from this DTMC. If the sequence $\{ \varepsilon_{s}\}_{s \geq 1}$ satisfies (9), the resulting DTMC samples have the same limiting distribution as the BD process (see Theorem 3.2). Contrary to the BD process, however, the DTMC is able to add or delete multiple edges in a single iteration. It therefore has the potential to deliver a significant reduction in computational cost.

\section{Extended simulation results} \label{sup:appendix_simulations}

This section includes a detailed description of all instances, settings, metrics, and results of the simulation study of Section 5.1 of the main document.

\paragraph{Instances:} We only consider instances with $p=1000$ variables. We consider instances with a small ($n=400$) and large ($n=1000)$ amount of observations. We consider sparse ($0.2\%$ edge density) and dense $(1\%)$ graphs. All instances are listed in Table \ref{tab:sim_settings}.

\paragraph{Simulation settings:} For an instance with $p$ variables, $n$ observations and edge density $\alpha \in (0,1)$, we generate a true graph $G^*$ at random by including each edge with a probability of $\alpha$. We then sample the true precision matrix $K^*$ from the G-Wishart distribution with $D = I$ as the scale parameter and $b=3$ as the shape parameter. We then sample the data $y$ containing $n$ observations from the $p$-variate normal distribution with covariance matrix $K^{*-1}$. We repeat this process eight times. We run our algorithms on $8$ cores with $8$ GiB memory per core. Both the BD-MPL and the MJ-MCMC algorithms are implemented in C++ and called in R. We use a prior density of $0.5\%$ and initialize our algorithm at the empty graph. Depending on the instances, the number of MCMC iterations to run our algorithm ranges between $100$ and one million. The MCMC iterations for each instance are listed in Table \ref{tab:sim_settings}.

\paragraph{Maximum jump size:} We observed that in some cases, when starting from an empty graph, and only in the first five iterations of the MCMC chain, our algorithm suggests very large jumps. These jumps are so large that in those first iterations, our algorithm outputs samples with an edge density as high as $50\%$. Later in the chain, the algorithm corrects this ``overshooting" and outputs MCMC samples with a normal edge density (i.e. $~1\%$) again. The initial overshooting, however, requires lots of computational power. That is why, during those first iterations, we restrict the jump size to $r p(p-1)/2$ edges per iteration, where we refer to $r\in (0,1)$ as the maximum jump size. $r$ denotes the percentage of edges that is allowed to flip in a single iteration. Depending on the instance, $r$ ranges from $0.0025$ to $1$. See Table \ref{tab:sim_settings} for the instance-specific values of the maximum jump size $r$.

\paragraph{Metrics:} We use four different metrics to assess the accuracy of the estimated edge-inclusion probabilities: AUC-PR, AUC-ROC, $p^+$, $p^-$. The AUC-PR represents the area under the precision-recall curve. The AUC-ROC represents the area under the receiver operating characteristic curve. The $p^+$ ($p^-$) values represent the average posterior edge inclusion probability for edges that are (not) in the true graph. The AUC-PR, AUC-ROC, and $p^+$ values range from $0$ (worse) to $1$ (best). The $p^-$ value ranges from $0$ (best) to $1$ (worse).

\begin{table}[]
\begin{tabular}{llllllll}
\multicolumn{1}{c}{\textbf{p}} & \multicolumn{1}{c}{\textbf{density}} & \multicolumn{1}{c}{\textbf{n}} & \multicolumn{1}{c}{\textbf{algorithm}} & \multicolumn{1}{c}{
$\varepsilon_1$} & \multicolumn{1}{c}{\textbf{decay} in $\varepsilon_s$} & \multicolumn{1}{c}{\textbf{max jump}} & \multicolumn{1}{c}{\textbf{MCMC Iterations}} \\
\hline
$1000$                         & $0.2\%$                               & $400$                          & MJ-MCMC                                & $0.001$                              & none                               & $1$                                   & $40000$                                      \\
                               &                                      &                                &                                        & $0.01$                               & none                               & $1$                                   & $4000$                                       \\
                               &                                      &                                &                                        & $0.05$                               & none                               & $1$                                   & $750$                                        \\
                               &                                      &                                &                                        & $0.3$                                & none                               & $1$                                   & $200$                                        \\
                               &                                      &                                &                                        & $0.7$                                & none                               & $0.0025$                              & $100$                                        \\
                               &                                      &                                &                                        & $0.95$                               & none                               & $0.0025$                              & $100$                                        \\
                               &                                      &                                &                                        & $0.3$                                & slow                               & $1$                                   & $500$                                        \\
                               &                                      &                                &                                        & $0.3$                                & fast                               & $1$                                   & $40000$                                      \\
                               &                                      &                                & \multicolumn{1}{c}{MPL-BD}             & \textit{NA}                          & \textit{NA}                        & \textit{NA}                           & $300000$                                     \\
\hline
$1000$                         & $0.2\%$                               & $1000$                         & MJ-MCMC                                & $0.001$                              & none                               & $1$                                   & $40000$                                      \\
                               &                                      &                                &                                        & $0.01$                               & none                               & $1$                                   & $4000$                                       \\
                               &                                      &                                &                                        & $0.05$                               & none                               & $1$                                   & $750$                                        \\
                               &                                      &                                &                                        & $0.3$                                & none                               & $0.0025$                              & $1000$                                       \\
                               &                                      &                                &                                        & $0.7$                                & none                               & $0.0025$                              & $100$                                        \\
                               &                                      &                                &                                        & $0.95$                               & none                               & $0.0025$                              & $100$                                        \\
                               &                                      &                                &                                        & $0.3$                                & slow                               & $0.0025$                              & $500$                                        \\
                               &                                      &                                &                                        & $0.3$                                & fast                               & $0.0025$                              & $40000$                                      \\
                               &                                      &                                & \multicolumn{1}{c}{MPL-BD}             & \textit{NA}                          & \textit{NA}                        & \textit{NA}                           & $300000$                                     \\
\hline
$1000$                         & 1\%                                  & $400$                          & MJ-MCMC                                & $0.001$                              & none                               & $1$                                   & $100000$                                     \\
                               &                                      &                                &                                        & $0.01$                               & none                               & $1$                                   & $10000$                                      \\
                               &                                      &                                &                                        & $0.05$                               & none                               & $0.01$                                & $500$                                        \\
                               &                                      &                                &                                        & $0.25$                               & none                               & $0.05$                                & $100$                                        \\
                               &                                      &                                &                                        & $0.75$                               & none                               & $0.05$                                & $200$                                        \\
                               &                                      &                                &                                        & $0.25$                               & slow                               & $0.0025$                              & $3000$                                       \\
                               &                                      &                                &                                        & $0.25$                               & fast                               & $0.0025$                              & $100000$                                     \\
                               &                                      &                                & \multicolumn{1}{c}{MPL-BD}             & \textit{NA}                          & \textit{NA}                        & \textit{NA}                           & $1000000$                                    \\
\hline
$1000$                         & 1\%                                  & $1000$                         & MJ-MCMC                                & $0.001$                              & none                               & $1$                                   & $100000$                                     \\
                               &                                      &                                &                                        & $0.01$                               & none                               & $1$                                   & $10000$                                      \\
                               &                                      &                                &                                        & $0.05$                               & none                               & $0.0025$                              & $5000$                                       \\
                               &                                      &                                &                                        & $0.25$                               & none                               & $0.01$                                & $100$                                        \\
                               &                                      &                                &                                        & $0.75$                               & none                               & $0.05$                                & $200$                                        \\
                               &                                      &                                &                                        & $0.25$                               & slow                               & $0.0025$                              & $3000$                                       \\
                               &                                      &                                &                                        & $0.25$                               & fast                               & $0.0025$                              & $100000$                                     \\
                               &                                      &                                & \multicolumn{1}{c}{MPL-BD}             & \textit{NA}                          & \textit{NA}                        & \textit{NA}                           & $500000$               \\
\hline
\end{tabular}
\caption{\textit{All instances with their corresponding maximum jump size (``max jump") and number of MCMC iterations}}
\label{tab:sim_settings}
\end{table}

\subsection{$p=1000$, $n=400$, $0.2\%$ edge density}

\begin{figure}[H] 
    \centering
    \begin{tabular}[t]{cc}
        \begin{subfigure}[t]{0.5\textwidth}
            \centering            \includegraphics[width=1\linewidth]{  p1000_n400_sparse_AUCPR_zoomedout.png}
        \end{subfigure} &
        \begin{subfigure}[t]{0.5\textwidth}
            \centering            \includegraphics[width=1\textwidth]{  p1000_n400_sparse_AUCPR_zoomedin.png}
        \end{subfigure}
    \end{tabular}\\
\caption{\textit{AUC-PR scores over running time for the BD-MPL algorithm and for our approach with $\varepsilon_s = 0.001,0.01,0.3,0.7,0.95$ and for a quickly and slowly decaying $\varepsilon_s$. Zoomed out on the left, zoomed in on the right. Results are the average of eight replications on instances with $p=1000$ variables, $n=400$ observations, and $0.2\%$ edge density. This figure is the same as Figure 2 in the main document.
}}
\end{figure}

\begin{figure}[H] 
    \centering
    \begin{tabular}[t]{cc}
        \begin{subfigure}[t]{0.5\textwidth}
            \centering            \includegraphics[width=1\linewidth]{  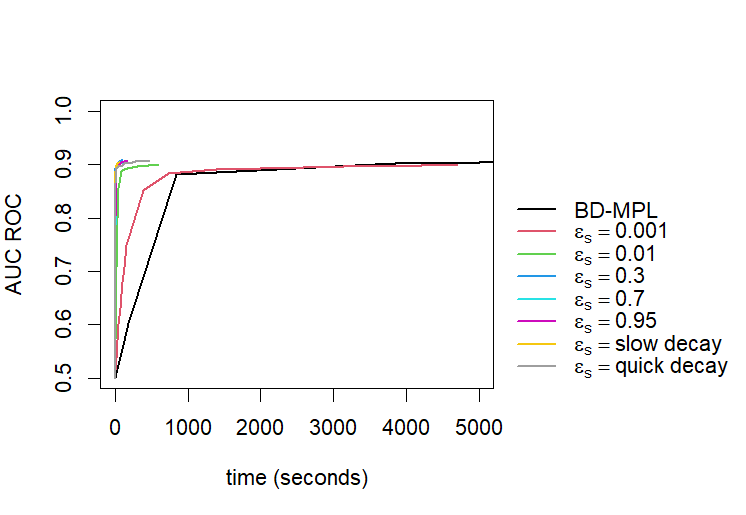}
        \end{subfigure} &
        \begin{subfigure}[t]{0.5\textwidth}
            \centering            \includegraphics[width=1\textwidth]{  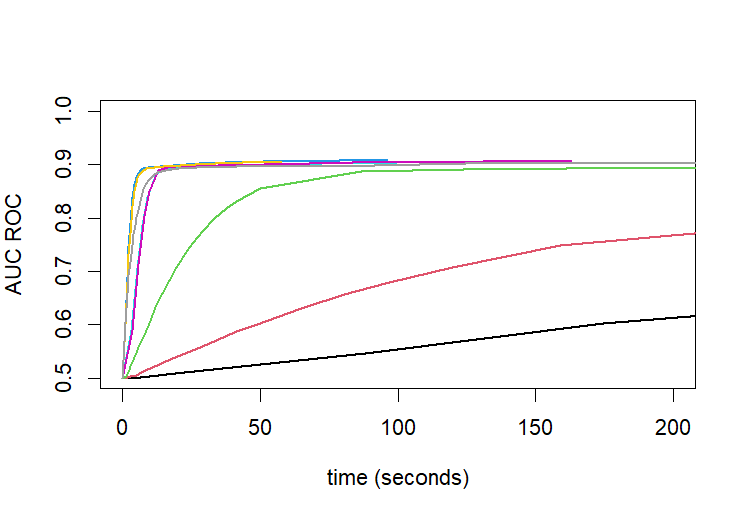}
        \end{subfigure}
    \end{tabular}\\
\caption{\textit{AUC-ROC scores over running time for the BD-MPL algorithm and for our approach with $\varepsilon_s = 0.001,0.01,0.3,0.7,0.95$ and for a quickly and slowly decaying $\varepsilon_s$. Zoomed out on the left, zoomed in on the right. Results are the average of eight replications on instances with $p=1000$ variables, $n=400$ observations, and edge density of $0.2\%$. 
}}
\end{figure}

\begin{figure}[H] 
    \centering
    \begin{tabular}[t]{cc}
        \begin{subfigure}[t]{0.5\textwidth}
            \centering            \includegraphics[width=1\linewidth]{  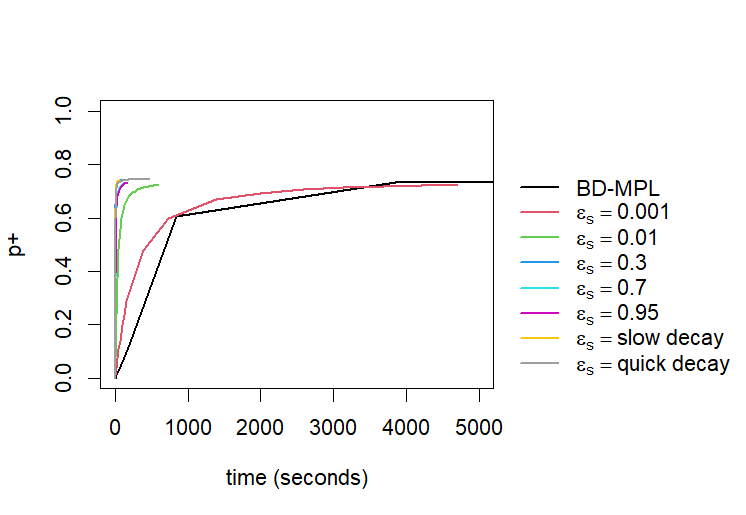}
        \end{subfigure} &
        \begin{subfigure}[t]{0.5\textwidth}
            \centering            \includegraphics[width=1\textwidth]{  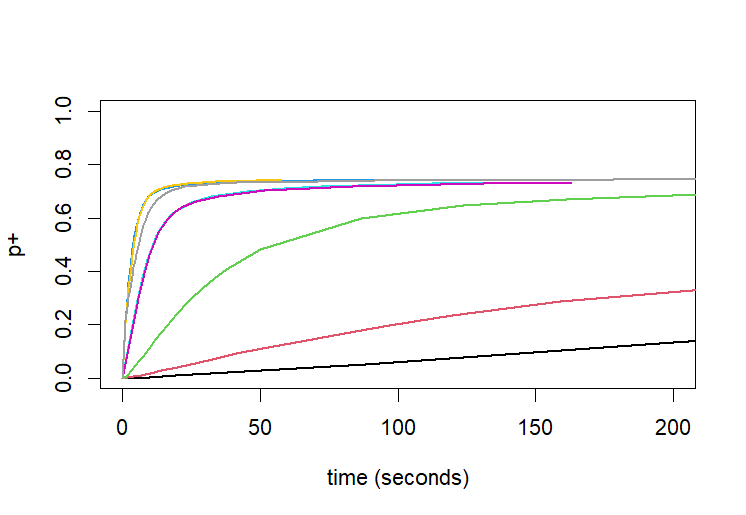}
        \end{subfigure}
    \end{tabular}\\
\caption{\textit{$p^+$ scores over running time for the BD-MPL algorithm and for our approach with $\varepsilon_s = 0.001,0.01,0.3,0.7,0.95$ and for a quickly and slowly decaying $\varepsilon_s$. Zoomed out on the left, zoomed in on the right. Results are the average of eight replications on instances with $p=1000$ variables, $n=400$ observations, and $0.2\%$ edge density.
}}
\end{figure}

\begin{figure}[H] 
    \centering
    \begin{tabular}[t]{cc}
        \begin{subfigure}[t]{0.5\textwidth}
            \centering            \includegraphics[width=1\linewidth]{  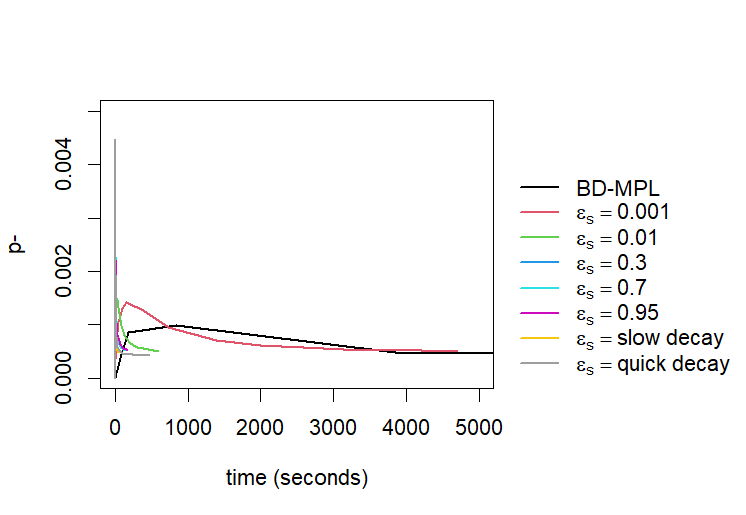}
        \end{subfigure} &
        \begin{subfigure}[t]{0.5\textwidth}
            \centering            \includegraphics[width=1\textwidth]{  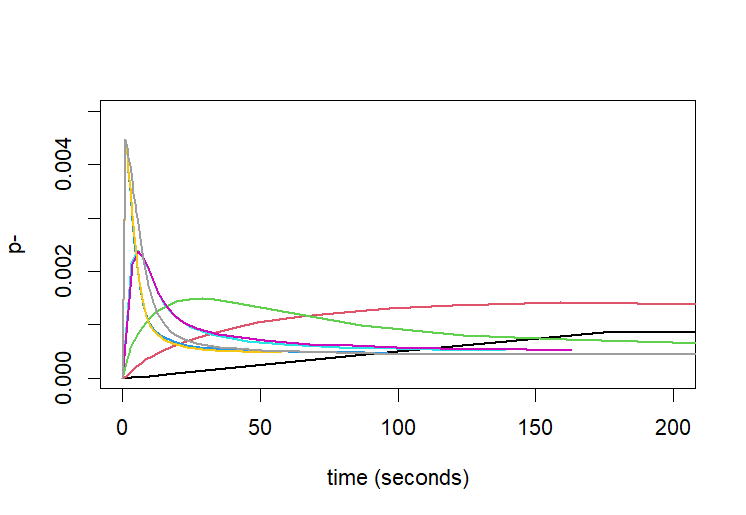}
        \end{subfigure}
    \end{tabular}\\
\caption{\textit{$p^-$ scores over running time for the BD-MPL algorithm and for our approach with $\varepsilon_s = 0.001,0.01,0.3,0.7,0.95$ and for a quickly and slowly decaying $\varepsilon_s$. Zoomed out on the left, zoomed in on the right. Results are the average of eight replications on instances with $p=1000$ variables, $n=400$ observations, and $0.2\%$ edge density.
}}
\end{figure}

\subsection{$p=1000$, $n=1000$, $0.2\%$ edge density}

\begin{figure}[H] 
    \centering
    \begin{tabular}[t]{cc}
        \begin{subfigure}[t]{0.5\textwidth}
            \centering            \includegraphics[width=1\linewidth]{  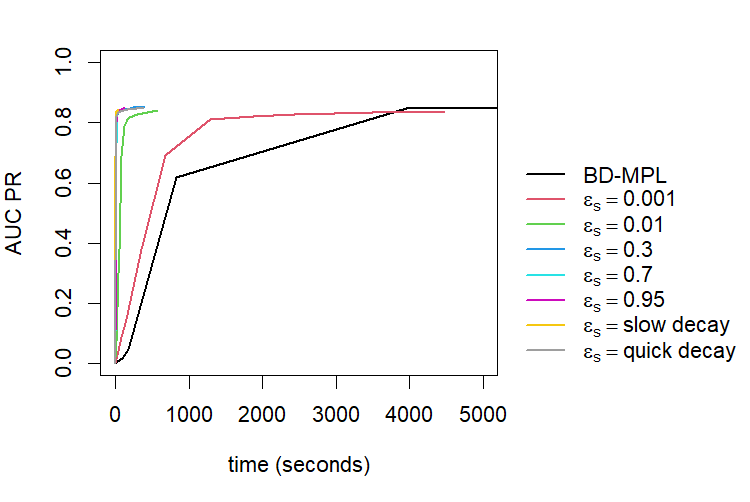}
        \end{subfigure} &
        \begin{subfigure}[t]{0.5\textwidth}
            \centering            \includegraphics[width=1\textwidth]{  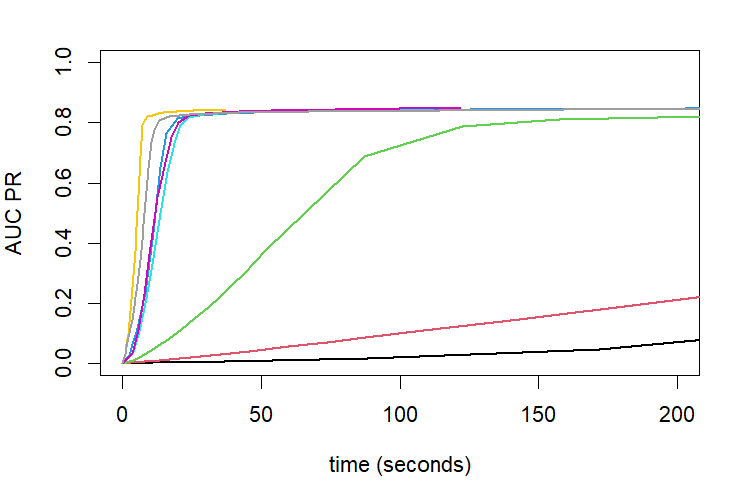}
        \end{subfigure}
    \end{tabular}\\
\caption{\textit{AUC-PR scores over running time for the BD-MPL algorithm and for our approach with $\varepsilon_s = 0.001,0.01,0.3,0.7,0.95$ and for a quickly and slowly decaying $\varepsilon_s$. Zoomed out on the left, zoomed in on the right. Results are the average of eight replications on instances with $p=1000$ variables, $n=1000$ observations, and $0.2\%$ edge density.}}
\end{figure}

\begin{figure}[H] 
    \centering
    \begin{tabular}[t]{cc}
        \begin{subfigure}[t]{0.5\textwidth}
            \centering            \includegraphics[width=1\linewidth]{  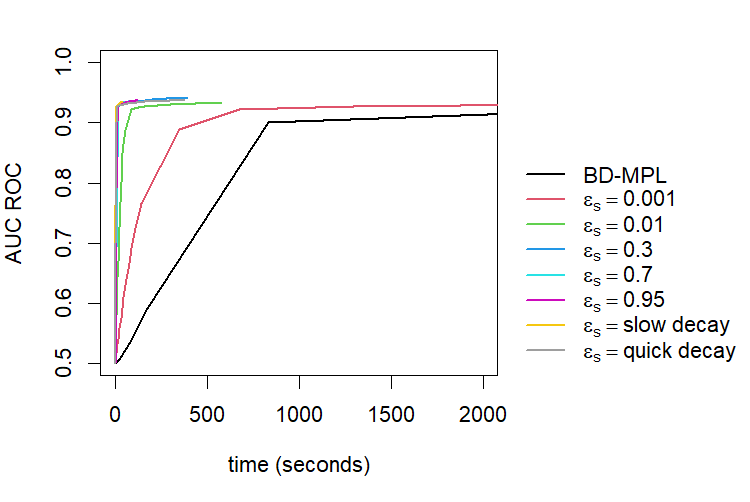}
        \end{subfigure} &
        \begin{subfigure}[t]{0.5\textwidth}
            \centering            \includegraphics[width=1\textwidth]{  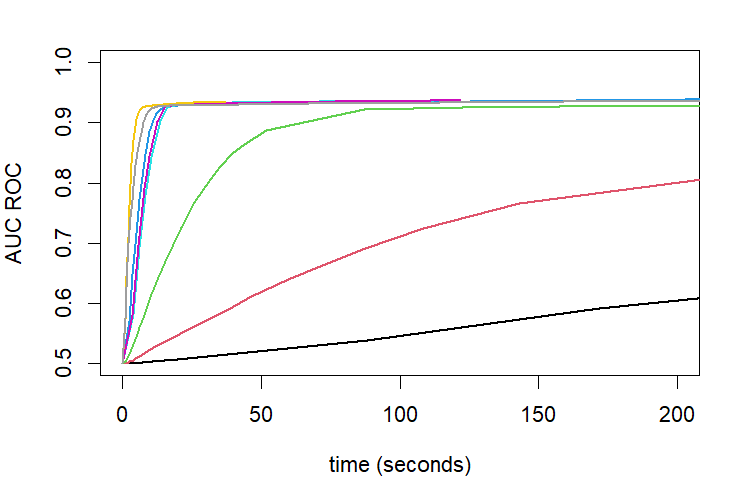}
        \end{subfigure}
    \end{tabular}\\
\caption{\textit{AUC-ROC scores over running time for the BD-MPL algorithm and for our approach with $\varepsilon_s = 0.001,0.01,0.3,0.7,0.95$ and for a quickly and slowly decaying $\varepsilon_s$. Zoomed out on the left, zoomed in on the right. Results are the average of eight replications on instances with $p=1000$ variables, $n=1000$ observations, and edge density of $0.2\%$. 
}}
\end{figure}

\begin{figure}[H] 
    \centering
    \begin{tabular}[t]{cc}
        \begin{subfigure}[t]{0.5\textwidth}
            \centering            \includegraphics[width=1\linewidth]{  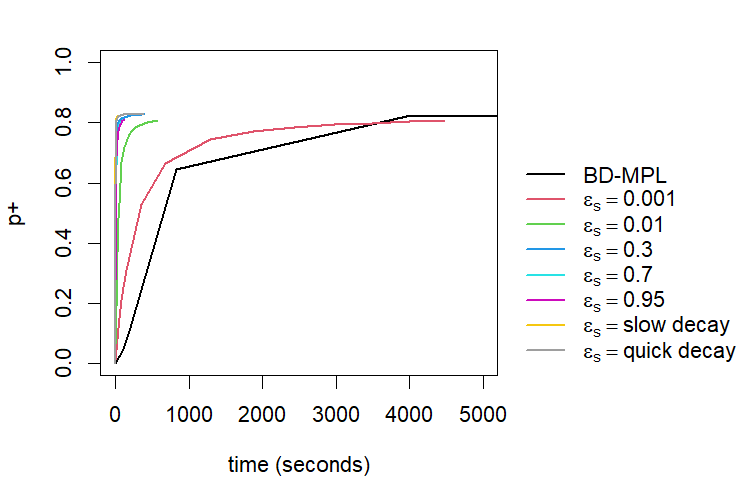}
        \end{subfigure} &
        \begin{subfigure}[t]{0.5\textwidth}
            \centering            \includegraphics[width=1\textwidth]{  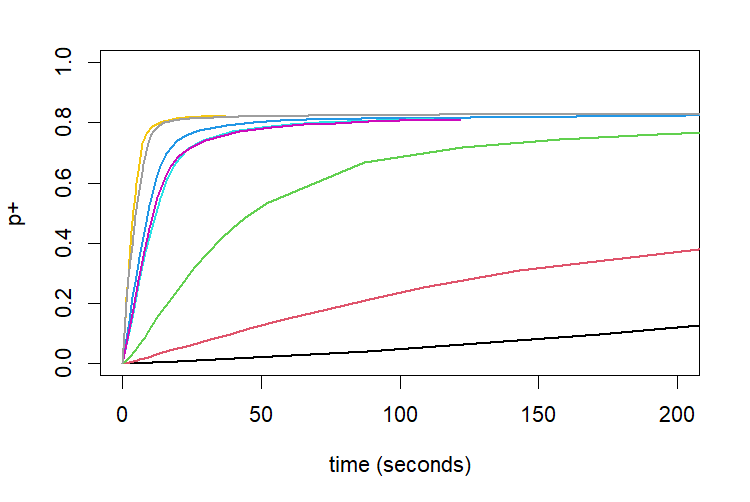}
        \end{subfigure}
    \end{tabular}\\
\caption{\textit{$p^+$ scores over running time for the BD-MPL algorithm and for our approach with $\varepsilon_s = 0.001,0.01,0.3,0.7,0.95$ and for a quickly and slowly decaying $\varepsilon_s$. Zoomed out on the left, zoomed in on the right. Results are the average of eight replications on instances with $p=1000$ variables, $n=1000$ observations, and $0.2\%$ edge density.
}}
\end{figure}

\begin{figure}[H] 
    \centering
    \begin{tabular}[t]{cc}
        \begin{subfigure}[t]{0.5\textwidth}
            \centering            \includegraphics[width=1\linewidth]{  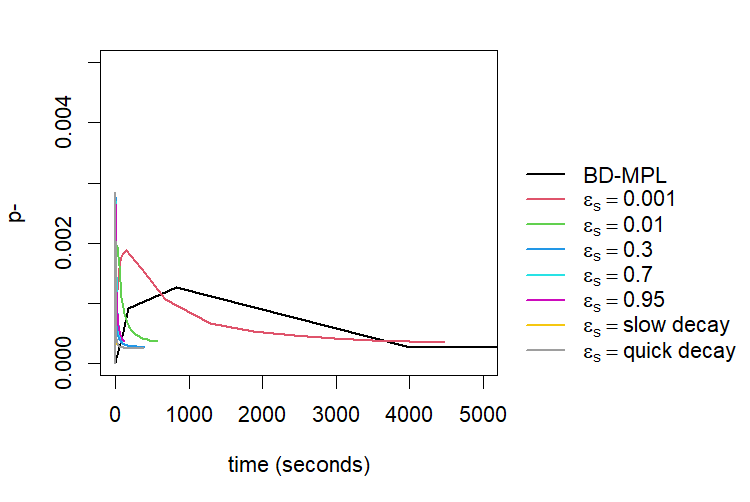}
        \end{subfigure} &
        \begin{subfigure}[t]{0.5\textwidth}
            \centering            \includegraphics[width=1\textwidth]{  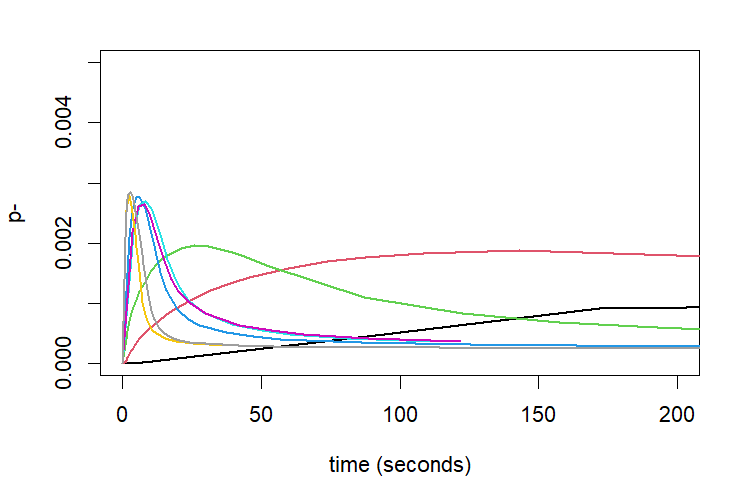}
        \end{subfigure}
    \end{tabular}\\
\caption{\textit{$p^-$ scores over running time for the BD-MPL algorithm and for our approach with $\varepsilon_s = 0.001,0.01,0.3,0.7,0.95$ and for a quickly and slowly decaying $\varepsilon_s$. Zoomed out on the left, zoomed in on the right. Results are the average of eight replications on instances with $p=1000$ variables, $n=1000$ observations, and $0.2\%$ edge density.
}}
\end{figure}

\subsection{$p=1000$, $n=400$, $1\%$ edge density}

\begin{figure}[H] 
    \centering
    \begin{tabular}[t]{cc}
        \begin{subfigure}[t]{0.5\textwidth}
            \centering            \includegraphics[width=1\linewidth]{  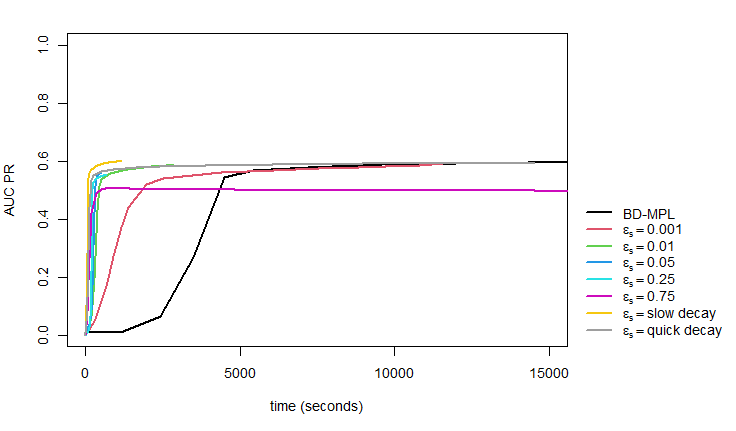}
        \end{subfigure} &
        \begin{subfigure}[t]{0.5\textwidth}
            \centering            \includegraphics[width=1\textwidth]{  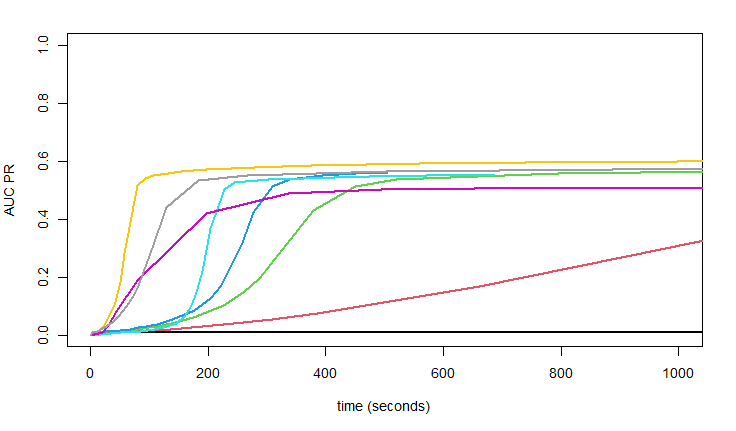}
        \end{subfigure}
    \end{tabular}\\
\caption{\textit{AUC-PR scores over running time for the BD-MPL algorithm and for our approach with $\varepsilon_s = 0.001,0.01,0.05,0.25,0.75$ and for a quickly and slowly decaying $\varepsilon_s$. Zoomed out on the left, zoomed in on the right. Results are the average of eight replications on instances with $p=1000$ variables, $n=400$ observations, and $1\%$ edge density.}}
\end{figure}

\begin{figure}[H] 
    \centering
    \begin{tabular}[t]{cc}
        \begin{subfigure}[t]{0.5\textwidth}
            \centering            \includegraphics[width=1\linewidth]{  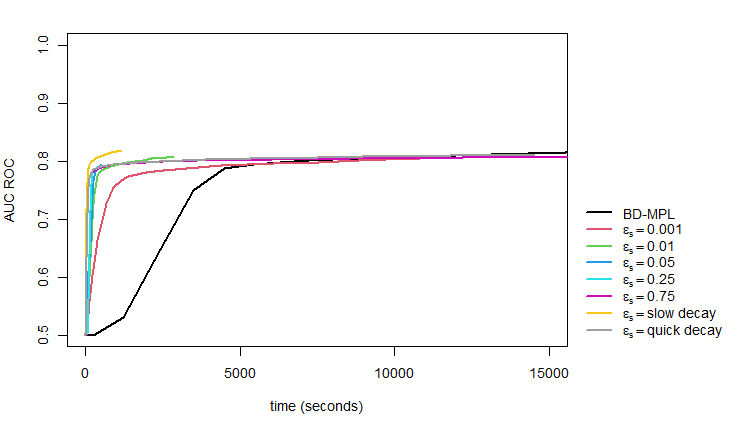}
        \end{subfigure} &
        \begin{subfigure}[t]{0.5\textwidth}
            \centering            \includegraphics[width=1\textwidth]{  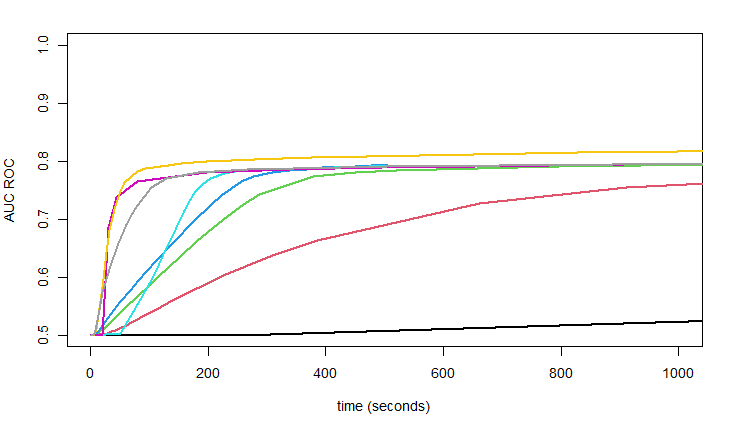}
        \end{subfigure}
    \end{tabular}\\
\caption{\textit{AUC-ROC scores over running time for the BD-MPL algorithm and for our approach with $\varepsilon_s = 0.001,0.01,0.05,0.25,0.75$ and for a quickly and slowly decaying $\varepsilon_s$. Zoomed out on the left, zoomed in on the right. Results are the average of eight replications on instances with $p=1000$ variables, $n=400$ observations, and edge density of $1\%$. 
}}
\end{figure}

\begin{figure}[H] 
    \centering
    \begin{tabular}[t]{cc}
        \begin{subfigure}[t]{0.5\textwidth}
            \centering            \includegraphics[width=1\linewidth]{  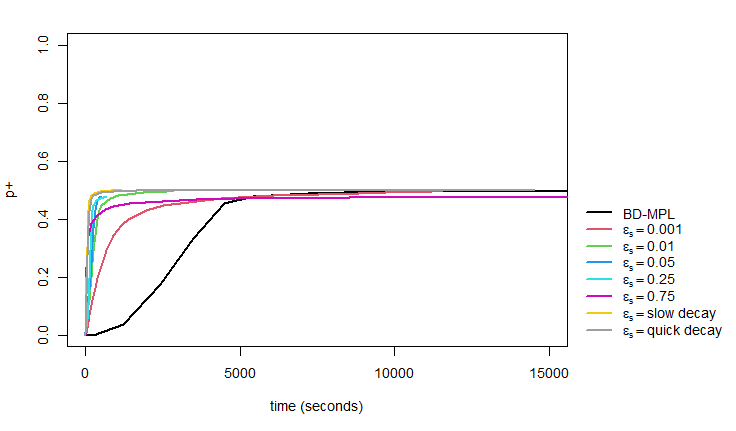}
        \end{subfigure} &
        \begin{subfigure}[t]{0.5\textwidth}
            \centering            \includegraphics[width=1\textwidth]{  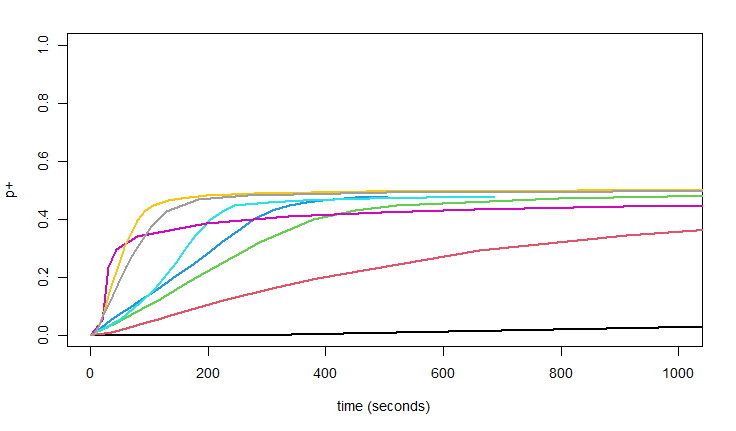}
        \end{subfigure}
    \end{tabular}\\
\caption{\textit{$p^+$ scores over running time for the BD-MPL algorithm and for our approach with $\varepsilon_s = 0.001,0.01,0.05,0.25,0.75$ and for a quickly and slowly decaying $\varepsilon_s$. Zoomed out on the left, zoomed in on the right. Results are the average of eight replications on instances with $p=1000$ variables, $n=400$ observations, and $1\%$ edge density.
}}
\end{figure}

\begin{figure}[H] 
    \centering
    \begin{tabular}[t]{cc}
        \begin{subfigure}[t]{0.5\textwidth}
            \centering            \includegraphics[width=1\linewidth]{  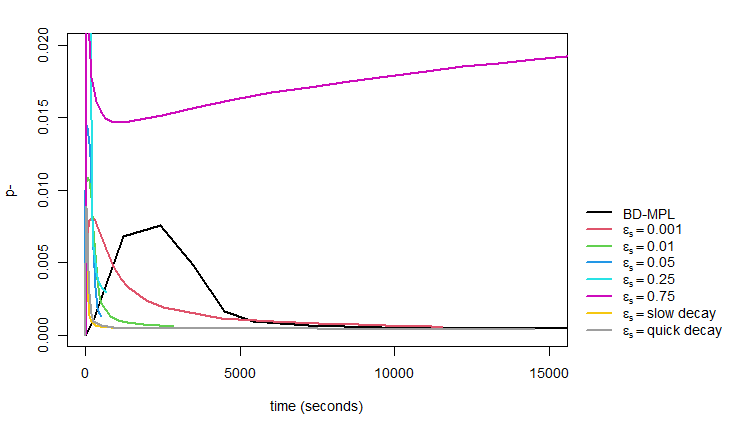}
        \end{subfigure} &
        \begin{subfigure}[t]{0.5\textwidth}
            \centering            \includegraphics[width=1\textwidth]{  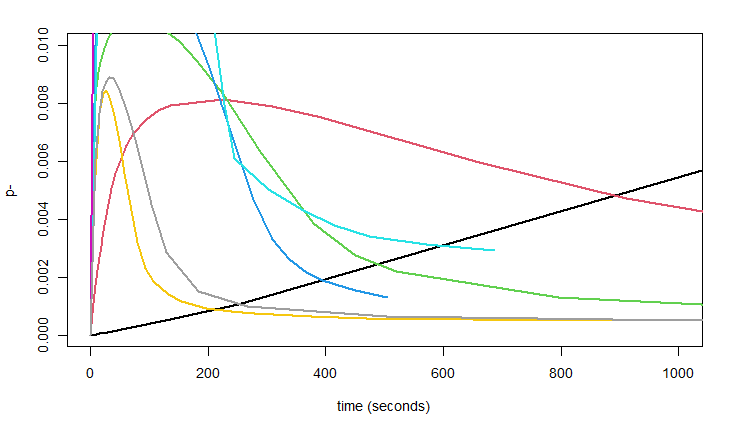}
        \end{subfigure}
    \end{tabular}\\
\caption{\textit{$p^-$ scores over running time for the BD-MPL algorithm and for our approach with $\varepsilon_s = 0.001,0.01,0.05,0.25,0.75$ and for a quickly and slowly decaying $\varepsilon_s$. Zoomed out on the left, zoomed in on the right. Results are the average of eight replications on instances with $p=1000$ variables, $n=400$ observations, and $1\%$ edge density.
}}
\end{figure}

\subsection{$p=1000$, $n=1000$, $1\%$ edge density}

\begin{figure}[H] 
    \centering
    \begin{tabular}[t]{cc}
        \begin{subfigure}[t]{0.5\textwidth}
            \centering            \includegraphics[width=1\linewidth]{  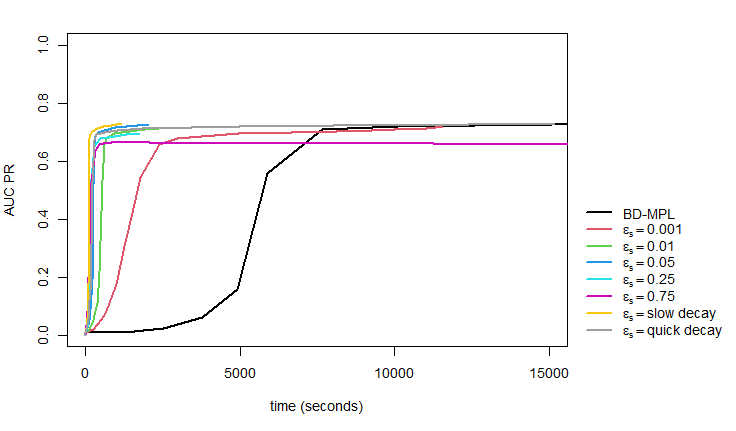}
        \end{subfigure} &
        \begin{subfigure}[t]{0.5\textwidth}
            \centering            \includegraphics[width=1\textwidth]{  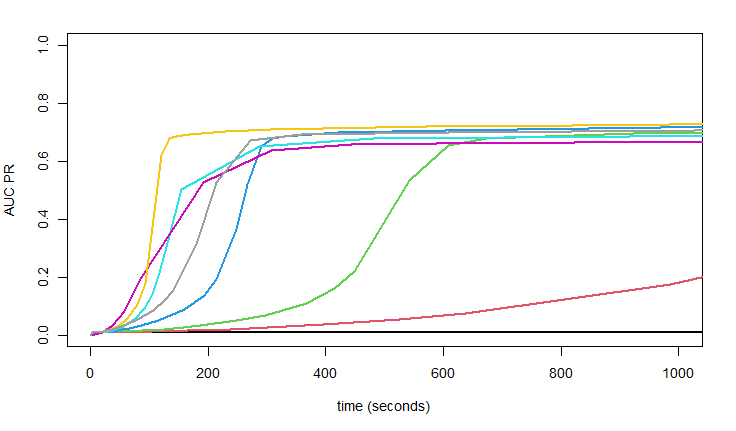}
        \end{subfigure}
    \end{tabular}\\
\caption{\textit{AUC-PR scores over running time for the BD-MPL algorithm and for our approach with $\varepsilon_s = 0.001,0.01,0.05,0.25,0.75$ and for a quickly and slowly decaying $\varepsilon_s$. Zoomed out on the left, zoomed in on the right. Results are the average of eight replications on instances with $p=1000$ variables, $n=1000$ observations, and $1\%$ edge density.}}
\end{figure}

\begin{figure}[H] 
    \centering
    \begin{tabular}[t]{cc}
        \begin{subfigure}[t]{0.5\textwidth}
            \centering            \includegraphics[width=1\linewidth]{  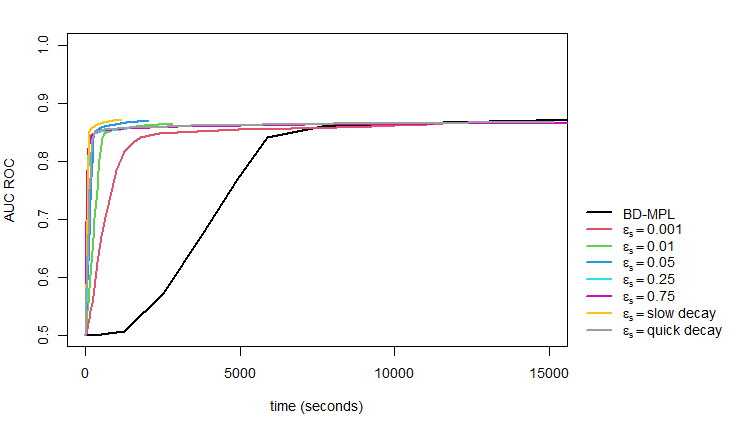}
        \end{subfigure} &
        \begin{subfigure}[t]{0.5\textwidth}
            \centering            \includegraphics[width=1\textwidth]{  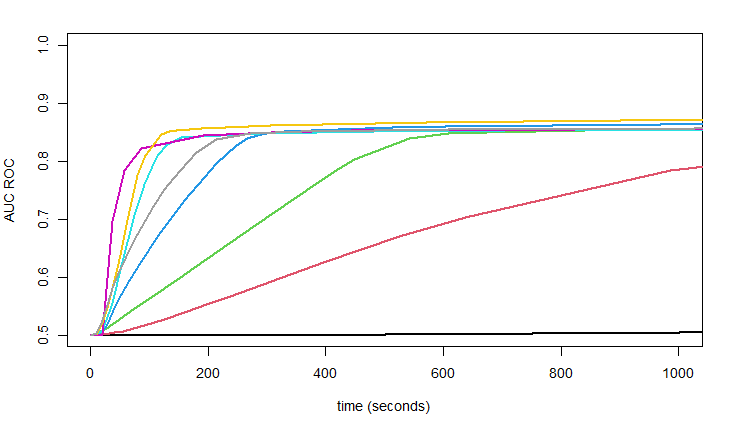}
        \end{subfigure}
    \end{tabular}\\
\caption{\textit{AUC-ROC scores over running time for the BD-MPL algorithm and for our approach with $\varepsilon_s = 0.001,0.01,0.05,0.25,0.75$ and for a quickly and slowly decaying $\varepsilon_s$. Zoomed out on the left, zoomed in on the right. Results are the average of eight replications on instances with $p=1000$ variables, $n=1000$ observations, and edge density of $1\%$. 
}}
\end{figure}

\begin{figure}[H] 
    \centering
    \begin{tabular}[t]{cc}
        \begin{subfigure}[t]{0.5\textwidth}
            \centering            \includegraphics[width=1\linewidth]{  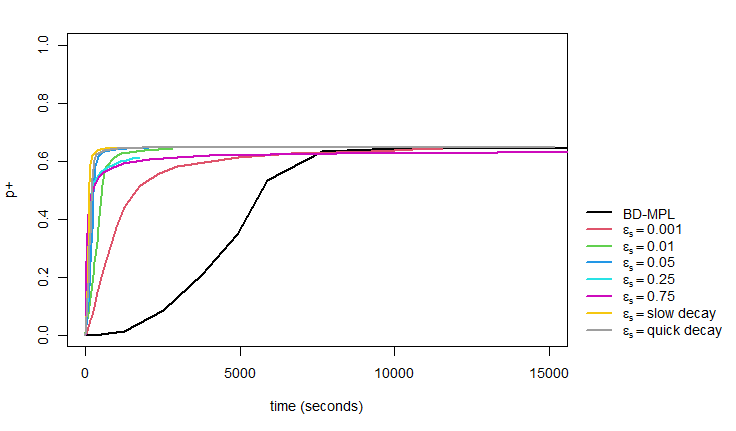}
        \end{subfigure} &
        \begin{subfigure}[t]{0.5\textwidth}
            \centering            \includegraphics[width=1\textwidth]{  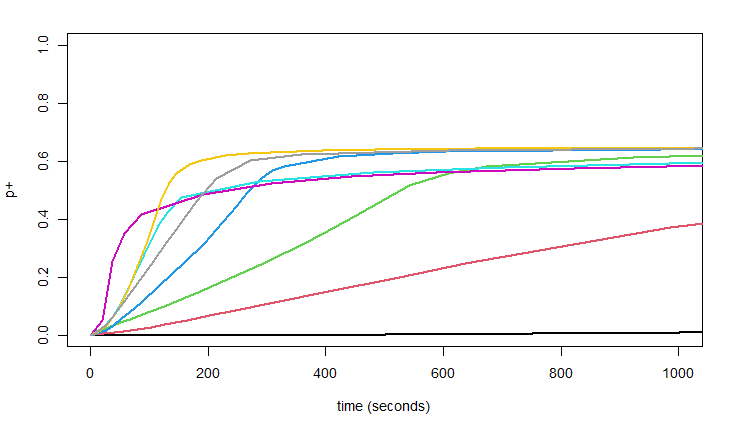}
        \end{subfigure}
    \end{tabular}\\
\caption{\textit{$p^+$ scores over running time for the BD-MPL algorithm and for our approach with $\varepsilon_s = 0.001,0.01,0.05,0.25,0.75$ and for a quickly and slowly decaying $\varepsilon_s$. Zoomed out on the left, zoomed in on the right. Results are the average of eight replications on instances with $p=1000$ variables, $n=1000$ observations, and $1\%$ edge density.
}}
\end{figure}

\begin{figure}[H] 
    \centering
    \begin{tabular}[t]{cc}
        \begin{subfigure}[t]{0.5\textwidth}
            \centering            \includegraphics[width=1\linewidth]{  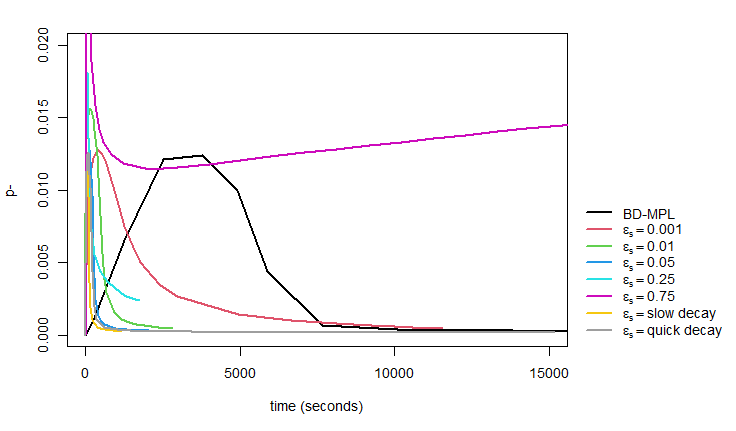}
        \end{subfigure} &
        \begin{subfigure}[t]{0.5\textwidth}
            \centering            \includegraphics[width=1\textwidth]{  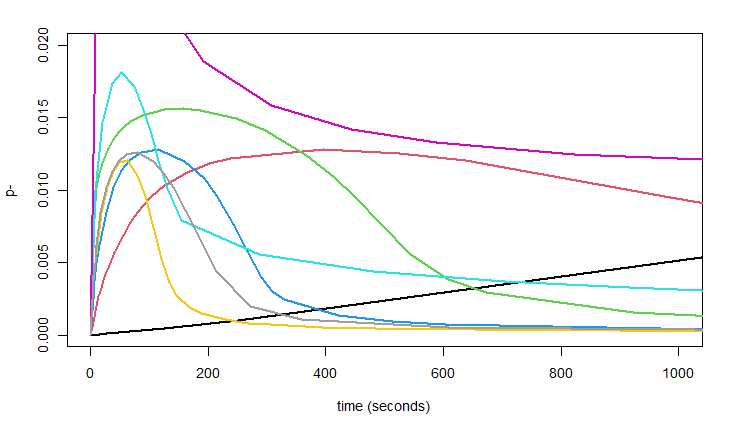}
        \end{subfigure}
    \end{tabular}\\
\caption{\textit{$p^-$ scores over running time for the BD-MPL algorithm and for our approach with $\varepsilon_s = 0.001,0.01,0.05,0.25,0.75$ and for a quickly and slowly decaying $\varepsilon_s$. Zoomed out on the left, zoomed in on the right. Results are the average of eight replications on instances with $p=1000$ variables, $n=1000$ observations, and $1\%$ edge density.
}}
\end{figure}

\bibliographystyle{ba}
\bibliography{sample}